\documentclass[letterpaper]{article}

\newcommand{\longversion}[1]{#1}
\newcommand{\shortversion}[1]{}

\usepackage{amssymb,amsmath,amsthm}
\usepackage{enumerate}
\usepackage{graphicx}
\usepackage{xspace}
\usepackage{tikz}

\usepackage{vmargin}
\setmarginsrb{1in}{0.9in}{1in}{1.1in}{0cm}{0cm}{8mm}{8mm}

\DeclareMathOperator{\operatorClassP}{P}
\newcommand{\classP}{\ensuremath{\operatorClassP}}
\DeclareMathOperator{\operatorClassNP}{NP}
\newcommand{\classNP}{\ensuremath{\operatorClassNP}}
\DeclareMathOperator{\operatorClassFPT}{FPT}
\newcommand{\classFPT}{\ensuremath{\operatorClassFPT}}
\DeclareMathOperator{\operatorClassW}{W}
\newcommand{\classW}[1]{\ensuremath{\operatorClassW[#1]}}

\DeclareMathOperator{\operatorClassXP}{XP}
\newcommand{\classXP}{\ensuremath{\operatorClassXP}}

\usepackage{thm-restate}

\newcommand{\SR}{\textsc{SR}\xspace}
\newcommand{\SRK}[1]{\textsc{SR}$_{ #1 }$\xspace}
\newcommand{\NMTS}{\textsc{NMTS}\xspace}
\newcommand{\SCD}{\textsc{SCD}\xspace}
\newcommand{\WSR}{\textsc{WSR}\xspace}
\newcommand{\catFull}{\textsc{Splits Reconstruction for Caterpillars of Un\-boun\-ded Hair-Length and Maximum Degree 3}}
\newcommand{\catShort}{\textsc{SRC}}
\newcommand{\weights}{\omega: V \rightarrow \mathbb{N}}
\newcommand{\iP}{I_P}
\newcommand{\iC}{I_C}

\newcommand{\WSRK}[1]{\textsc{WSR}$_{ #1 }$\xspace}

\newcommand{\set}[1]{\{ #1 \}}
\newcommand{\mywlog}{without loss of generality\xspace}
\newcommand{\myWlog}{Without loss of generality\xspace}

\newtheorem{theorem}{Theorem}

\newtheorem{definition}[theorem]{Definition}

\newtheorem{lemma}[theorem]{Lemma}
\newtheorem{prop}{Property}
\newtheorem{claimLNCS}{Claim}

\title{Complexity of Splits Reconstruction\\ for Low-Degree Trees%
\thanks{A preliminary version of this article appeared in the proceedings of WG 2011 \cite{GaspersLSS11}.
The authors acknowledge the support of Conicyt Chile via projects Fondecyt 11090390 (M.L., K.S.),
Fondecyt 11090141 (M.S.), Anillo ACT88 (K.S.), and Basal-CMM (S.G., M.S., K.S.).
The first author acknowledges partial support from the European Research Council (COMPLEX REASON, 239962).
The second and fourth authors acknowledge the support of
the French Agence Nationale de la Recherche (ANR AGAPE ANR-09-BLAN-0159-03)} 
}

%

\author{Serge Gaspers\thanks{%
Institute of Information Systems, Vienna University of Technology, Vienna, Austria.
E-mail: \texttt{gaspers@kr.tuwien.ac.at}}
\and Mathieu Liedloff\thanks{%
LIFO, Universit{\'e} d'Orl{\'e}ans, Orl{\'e}ans, France.
E-mail: \texttt{mathieu.liedloff@univ-orleans.fr}}
\and Maya Stein\thanks{%
CMM, Universidad de Chile, Santiago, Chile.
E-mail: \texttt{mstein@dim.uchile.cl}}
\and Karol Suchan\thanks{%
FIC, Universidad Adolfo Ib{\'a}{\~n}ez, Santiago, Chile.
E-mail: \texttt{karol.suchan@uai.cl}} ~\thanks{%
WMS, AGH - University of Science and Technology, Krakow, Poland.}}

\date{}


%
%
%

\begin{document}

\maketitle

\begin{abstract}
Given a vertex-weighted tree $T$, the split of an edge $x y$ in $T$ is $\min
\set{s_x(x y), s_y(x y)}$ where $s_u(u v)$ is the sum of all weights of
vertices that are closer to $u$ than to $v$ in $T$.
Given a set of weighted vertices $V$ and a multiset of splits
$\mathcal{S}$, we consider the problem of constructing a tree on $V$ whose
splits correspond to $\mathcal{S}$. The problem is known to be
\classNP-complete, even when all vertices have unit weight and the maximum
vertex degree of $T$ is required to be no more than $4$. We show that
\begin{itemize}
 \item the problem is strongly \classNP-complete when $T$ is
required to be a path,
 \item the problem is \classNP-complete when all vertices have unit
weight and the maximum degree of $T$ is required to be no more than $3$, and
 \item it remains \classNP-complete when all vertices have unit weight and $T$ is
required to be a caterpillar with unbounded hair length and maximum degree at
most $3$.
\end{itemize}
We also design polynomial time algorithms for
\begin{itemize}
\item the variant where $T$ is required to be a path and the number of distinct vertex weights is constant, and
\item the variant where all vertices have unit weight and $T$ has a constant number of leaves.
\end{itemize}
The latter algorithm is
not only polynomial when the number of leaves, $k$, is a constant, but also fixed-parameter tractable
when parameterized by $k$.

Finally, we shortly discuss the problem when the vertex
weights are not given but can be freely chosen by an algorithm.

\medskip
\noindent
The considered problem is related to building libraries of chemical compounds
used for drug design and discovery. In these inverse problems, the goal is to
generate chemical compounds having desired structural properties, as there is a
strong correlation between structural properties, such as the Wiener index,
which is closely connected to the considered problem, and biological activity.
\end{abstract}

\section{Introduction}
\label{sec:intro}\label{sec:prelim}

In this paper, we consider trees $T=(V,E)$ where integer weights are associated to vertices by a function $\weights$, where $\mathbb{N}$ denotes the set of natural numbers excluding $0$.

\begin{definition}
Let $T$ be a tree and $\weights$ be a function. The \emph{split} of an edge $e$ in $T$ is the minimum of $\Omega(T_1)$ and $\Omega(T_2)$, where $T_1$ and $T_2$ are the two trees obtained by deleting $e$ from $T$, and $\Omega(T_i)=\sum_{v\in T_i}\omega(v)$. We use $\mathcal{S}(T)$ to denote the multiset of splits of $T$.
\end{definition}

\noindent
We consider the problem of reconstructing a tree with a given multiset of splits and a given set of weighted vertices.

\begin{quote}
\textsc{Weighted Splits Reconstruction} (\WSR):  Given a set $V$ of $n$ vertices, a weight
function $\weights$, and a multiset $\mathcal{S}$ of integers, is there a tree $T$ whose multiset of splits is $\mathcal{S}$ (i.e. $\mathcal{S}(T)=\mathcal{S}$)?
\end{quote}

%

\noindent
The \textsc{Weighted Splits Reconstruction for Trees of Maximum Degree $k$}  problem (\WSRK{k}) is defined in the same way, except that
we restrict \longversion{the tree }$T$ to have maximum degree at most $k$. When we require $T$ to belong to a class of trees $\mathcal{T}$, the problem is called \textsc{Weighted Splits Reconstruction for $\mathcal{T}$}.\longversion{

}
When $\omega$ assigns unit weights to the vertices, the problem is simply called \textsc{Splits Reconstruction} (\SR). The \textsc{Splits Reconstruction for Trees of Maximum Degree $k$}  problem (\SRK{k}) and the \textsc{Splits Reconstruction for $\mathcal{T}$}  are the obvious unweighted counterparts of the weighted variants defined above.

\smallskip

\noindent \textbf{Related Work.}
In the field of Chemical Graph Theory \cite{Balaban76,Bonchev91,Trinajstic92}, molecules are modeled by graphs in order to study the physical properties of chemical compounds. A chemical graph is a graph, where vertices represent atoms of a chemical compound and edges the chemical bonds between them.  Within the area of quantitative structure-activity relationship (QSAR), several structural measures of chemical graphs were identified that quantitatively correlate with a well defined process, such as biological activity or chemical reactivity. Probably the most widely known example is the \emph{Wiener index} (see \cite{Hammer97}): the sum of the distances in a graph between each pair of vertices, where the distance between two vertices is the \longversion{length (the number of edges) of }\shortversion{number of edges on }a shortest path from one to the other. Wiener \cite{Wiener47} found a strong correlation between the boiling points of paraffins and the Wiener index. From then on, many other topological (using the information of the chemical graph) and topographical (using the information of the chemical graph and the location of its vertices in space) indices were introduced and their correlation with various other biological activities was investigated.

In Combinatorial Chemistry, drug design is facilitated by building libraries of molecules that are structurally related (via the Wiener index or any of the other numerous indices). We face inverse problems where the goal is to design new compounds that have a prescribed structural information (see also \cite{FaulonB10}).

Goldman et al.~\cite{GoldmanILPW00} study problems related to the design of combinatorial libraries for drug design from an algorithmic and complexity-theoretic point of view, following the heuristic approaches of \cite{SheridanK95} and \cite{GilletWBG99}. \shortversion{They }\longversion{Goldman et al.~}show that for every positive integer $W$, except $2$ and $5$, there exists a graph with Wiener index $W$. They also show that every integer, except a finite set, is the Wiener index of some tree. For constructing a tree (of unbounded or bounded maximum degree) with a given Wiener index, they devise pseudo-polynomial dynamic programming algorithms. Goldman et al.~also introduce the \textsc{Splits Reconstruction} problem and recall a result due to Wiener \cite{Wiener47}: the Wiener index of a tree $T$ on $n$ vertices with unit weights is $\sum_{s\in \mathcal{S}(T)} s\cdot (n-s)$. They show that $\SR$ is \classNP-complete and give an exponential-time algorithm without running time analysis.

As it is not reasonable to construct chemical trees with arbitrarily high vertex degrees, Li and Zhang \cite{LiZ04} studied \SRK{4} and showed that it is also \classNP-complete. Their algorithm to construct a tree with maximum degree at most $4$ to solve \SRK{4} runs in exponential time (no running time analysis is provided)
and creates weighted vertices in intermediate steps.

In order to reconstruct glycans or carbohydrate sugar chains, Aoki-Kinoshita et al. \cite{AokiKinoshitaKKLW06} study the reconstruction of a node-labeled supertree from a set of node-labeled subtrees. They give a 6-approximation algorithm for this problem, which generalizes the smallest superstring problem.\longversion{

}
We refer to \cite{DobryninEG01} surveying results on the Wiener index for trees.

\smallskip

\noindent \textbf{Our Results.}
By the result of Li and Zhang \cite{LiZ04}, \SRK{4} is \classNP-complete, while \SRK{2} is trivially in \classP. We close this gap by showing that \SRK{3} is \classNP-complete by a reduction from \textsc{Numerical Matching with Target Sums} (defined below). It is even \classNP-complete for caterpillars with unbounded hair length. Identifying small classes of trees for which the problem is \classNP-complete may be important for future investigations in the spirit of the deconstruction of hardness proofs \cite{KomusiewiczNU09} which aim at identifying parameters for which the problem becomes tractable when these parameters are small.

Our main result proves that \WSRK{2} is strongly \classNP-complete by a reduction from a variant of \textsc{Numerical Matching with Target Sums} in which all integers of the input are distinct. For the case where the weights of the vertices are chosen from a small set of values, our dynamic-programming algorithm solves \WSRK{2} in time $O(n^{k+3} \cdot k)$, where $k$ is the number of distinct vertex weights.
Although this running time is polynomial for every constant $k$, the degree of the polynomial depends on $k$. Thus, the running time becomes impractical, even for small values of $k$.
Parameterized complexity \cite{DowneyF99,FlumG06,Niedermeier06} is a theoretical framework that allows to distinguish between running times of the form $f(k) n^{g(k)}$ where the degree
of the polynomial depends on the parameter $k$ and running times of the form $f(k) n^{O(1)}$ where the exponential explosion of the running time is restricted to the parameter only.
The fundamental hierarchy of parameterized complexity classes is
\begin{align*}
 \classFPT \subseteq \classW{1} \subseteq \classW{2} \cdots \subseteq \classXP,
\end{align*}
where a parameterized problem is in \classFPT\ (fixed-parameter tractable) if there is a function $f$ such that the problem can be solved in time $f(k) n^{O(1)}$,
a problem is in XP if there are functions $f,g$ such that the problem can be solved in time $f(k) n^{g(k)}$,
and \classW{t}, $t\ge 1$, are parameterized intractibility classes giving strong evidence that a parameterized problem that is hard for any of these classes is not in FPT.
Our algorithm for \WSRK{2} parameterized by the number of distinct vertex weights places this problem in \classXP.
A generalization of this problem is \classW{1}-hard \cite{FellowsGR10}, but it remains open
whether this problem is fixed parameter tractable.
As a relevant parameter for \SR we identified $k$, the number of leaves in the reconstructed tree. This parameterization of \SR can be solved in time $O(8^{k \log k}\cdot n)$, and is thus
fixed-parameter tractable.

\smallskip

\noindent\textbf{Definitions.}
A \emph{caterpillar} is a tree consisting of a path, called its \emph{backbone}, and paths attached with one end to the backbone. Its \emph{hair length} is the maximum distance \longversion{(in terms of the number of edges) }from a leaf to the closest vertex of the backbone. A \emph{star} $K_{1,k}$ is a tree with $k$ leaves and one internal vertex, called the \emph{center}.
In our hardness proofs, we reduce from the following problem (problem [SP17] in \cite{GareyJ79}).

\begin{quote}
\textsc{Numerical Matching with Target Sums} (\NMTS):  Given three disjoint multisets $A$,$B$, and $S=\set{s_1, \hdots, s_m}$, each containing $m$ elements from $\mathbb{N}$, can $A\cup B$ be partitioned into $m$ disjoint sets $C_1, C_2, \hdots, C_m$, each containing exactly one element from each of $A$ and $B$, such that, for $1\le i\le m$, $\sum_{c\in C_i} c = s_i$?
\end{quote}
%

\shortversion{\noindent Due to space constraints, the proofs of the statements marked with \textbf{($\star$)} have been moved to the appendix.}

\longversion{
\smallskip

\noindent \textbf{Organization.}
The remainder of this paper is organized as follows. Section \ref{sec:hardnessPaths} shows that \WSRK{2} is \classNP-complete. On the positive side, we show in Section \ref{sec:algoPaths} that \WSRK{2} can be solved in polynomial time when the number of distinct vertex weights is bounded by a constant. That this result cannot be extended to \WSRK{3} is shown in Section \ref{sec:hardnessDeg3}: \SRK{3} remains \classNP-complete.
Section \ref{sec:algoLeaves} gives an \classFPT-algorithm for \SR parameterized by the number of leaves of the reconstructed tree.
The variant where the vertex weights are freely choosable is discussed in Section \ref{sec:freeweights} and we conclude with some directions for future research in Section \ref{sec:concl}.
}

\section{\WSRK{2} is strongly \classNP-complete}
\label{sec:hardnessPaths}

In this section, we show that \WSRK{2} is strongly \classNP-complete. First we introduce a new problem that is polynomial-time-reducible to \WSRK{2}, and then show that this new problem is strongly \classNP-hard.

\begin{quote}
\textsc{Scheduling with Common Deadlines} (\SCD): Given $n$ jobs with positive integer lengths $j_1, \hdots, j_n$ and $n$ deadlines $d_1 \le \hdots \le d_n$,
can the jobs be scheduled on two processors $P_1$ and $P_2$ such that at each deadline there is a processor that finishes a job exactly at this time,
and processors are never idle between the execution of two jobs?
\end{quote}

\noindent
To reinforce the intuition on this problem one may imagine that we want to satisfy delivery deadlines and avoid using any warehouse space to store a product between its fabrication and the delivery date. There is no restriction as to which product should be delivered at a given time.
(Another possibility is imagining computer scientists scheduling paper production to fit conference deadlines.)

Given an instance $(j_1, \hdots, j_n, d_1, \hdots, d_n)$ for \SCD, we construct an instance for \WSRK{2} as follows. For each job $j_i$, $1\le i\le n$, create a vertex $v_i$ with weight $\omega(v_i) = j_i$. For each deadline $d_i$, $1\le i\le n-1$, create a split $d_i$. We may assume that $\sum_{i=1}^n j_i = d_{n-1}+d_n$, otherwise we trivially face a \textsc{No}-instance.

Suppose the path $P=(v_{\pi(1)}, v_{\pi(2)}, \hdots, v_{\pi(n)})$ is a solution to \WSRK{2}. Say $\{v_{\pi(\ell)},v_{\pi(\ell+1)}\}$ is the edge associated to the split $d_{n-1}$. We construct a solution for \SCD by assigning the jobs $j_{\pi(1)}, j_{\pi(2)}, \hdots, j_{\pi(\ell)}$ to processor $P_1$, and the jobs
$ j_{\pi(n)}, j_{\pi(n-1)}, \hdots, j_{\pi(\ell+2)},j_{\pi(\ell+1)}$ to processor $P_2$, in this order. Note that then, one of the jobs $j_{\pi (\ell)}$, $j_{\pi(\ell +1)}$ ends at $d_{n-1}$, and the other at $-d_{n-1}+\sum_{i=1}^n j_i=d_n$, which is as desired.

On the other hand, if \SCD has a solution, then \WSRK{2} has a solution as well,
because the previous construction is easily inverted. Visually, the list of jobs of $P_2$ is reversed and appended to the list of jobs of $P_1$. Job lengths correspond to vertex weights and deadlines correspond to splits (the last deadline where a job from $P_1$ finishes is merged with the last deadline where a job from $P_2$ finishes). Thus, \SCD is polynomial-time-reducible to \WSRK{2}.

\begin{lemma}\label{lem:redScdWsrk}
 \SCD $\le_p$ \WSRK{2}.
\end{lemma}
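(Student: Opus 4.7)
The plan is to formalize the reduction already sketched in the paragraphs preceding the lemma, by checking both directions of the equivalence and the linear-time size bound. The construction sends an \SCD instance $(j_1,\dots,j_n, d_1\le\dots\le d_n)$ to the \WSRK{2} instance with vertex set $\{v_1,\dots,v_n\}$, weights $\omega(v_i)=j_i$, and split multiset $\{d_1,\dots,d_{n-1}\}$, rejecting upfront any input with $\sum_i j_i\neq d_{n-1}+d_n$ (both processors jointly execute all jobs, so they must end at the two latest deadlines). This is clearly polynomial in the input size.

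First I would handle the forward direction: given a solution path $P=(v_{\pi(1)},\dots,v_{\pi(n)})$, let $W_k=\sum_{i\le k}\omega(v_{\pi(i)})$ and pick any edge at position $\ell$ of split $d_{n-1}$; this forces $\{W_\ell,W_n-W_\ell\}=\{d_{n-1},d_n\}$, so \mywlog $W_\ell=d_{n-1}$. Assigning $j_{\pi(1)},\dots,j_{\pi(\ell)}$ to $P_1$ and $j_{\pi(n)},\dots,j_{\pi(\ell+1)}$ to $P_2$ in these orders yields finish-time sets $\{W_1,\dots,W_\ell\}$ and $\{W_n-W_{n-1},\dots,W_n-W_\ell\}$. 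I would then verify that the split at each edge $k<\ell$ is $W_k$ and at each edge $k>\ell$ is $W_n-W_k$, so the combined finish times equal the split multiset together with the extra value $W_n-W_\ell=d_n$, i.e., exactly $\{d_1,\dots,d_n\}$.

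Next, the reverse direction is just the inverse of this mapping: from a valid schedule, read off the jobs of $P_1$ in execution order and those of $P_2$ in reverse, concatenate them into a single path, and observe that the cumulative weights along this path reproduce the two processors' finish times. By the same $\min$-analysis, the resulting split multiset is $\{d_{i_1},\dots,d_{i_{a-1}}\}\cup\{d_{n-1}\}\cup\{d_{j_1},\dots,d_{j_{b-1}}\}$, which equals $\{d_1,\dots,d_{n-1}\}$ since $\{d_{i_a},d_{j_b\}}=\{d_{n-1},d_n\}$ is forced by the weight-sum condition.

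The only subtle point, which appears in both directions, is handling the $\min$ in the definition of a split: I have to show that on each side of the distinguished middle edge the ``correct'' cumulative sum always wins the $\min$, so deadlines and splits match as multisets and not merely as sets. This follows cleanly from $d_k\le d_{n-1}\le W_n/2$, which in turn follows from $\sum_i j_i=d_{n-1}+d_n$ together with the ordering $d_{n-1}\le d_n$.
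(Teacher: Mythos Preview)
Your proposal is correct and follows the same construction and two-direction argument that the paper gives in the paragraphs immediately preceding the lemma; you are simply more explicit about why the $\min$ in the split definition always selects the intended prefix sum, using $d_k\le d_{n-1}\le W_n/2$. This added care is appropriate, since the paper's own argument leaves that verification implicit.
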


\newcommand{\dNMTS}{dNMTS\xspace}
\noindent
In the remainder of this section, we show that \dNMTS is polynomial-time-reducible to \SCD. The \dNMTS problem is equal to the \NMTS problem, except that all integers in $A\cup B\cup S$ are pairwise distinct. This variant has been shown to be strongly \classNP-hard by Hulett et al.~\cite{HulettWW08}. As the proof becomes somewhat simpler, we use \dNMTS instead of \NMTS for our reduction.

Let us first give a high level description of the main ideas of the reduction. For a \dNMTS instance $(A,B,S)$, the elements of $A\cup B$ will be encoded as jobs, and the elements of $S$ will be encoded as deadlines. A convenient way to represent an element $s\in S$ is by introducing segments which are delimited to the left and the right by double deadlines, and whose distance is equivalent to $s$. The elements of $A\cup B\cup S$ are blown up by well-chosen additive factors that preserve solutions and make sure that the length of each segment can only be met by the sum of exactly two job-lengths, one corresponding to an element of $A$ and the other to an element of $B$.

Our reduction will create an instance whose solution assigns, in each segment, one $x$-job (a job corresponding to an $A$-element) and one $y$-job (a job corresponding to a $B$-element) to the same processor, such that these two jobs are the only jobs executed on this processor in this segment, thus providing a solution to \dNMTS. \myWlog, the $x$-job is scheduled first. As we must not introduce any restriction which $x$-jobs can be assigned to which segments, we introduce a deadline for each length of an $x$-job; these are the real deadlines. We refer to the $x$- and $y$-jobs as green jobs. The job lengths were blown up such that in each segment, exactly one processor starts with a green $x$-job, and in each segment, exactly one processor ends by executing a green $y$-job. In each segment, the green jobs must not overlap; this is achieved by multiplying all deadlines created so far and the corresponding job lengths by a factor $2$, and introducing fake deadlines at odd positions one unit before the real deadlines. If an $x$-job and a $y$-job overlapped, there would be no job ending at the fake deadline preceding the real deadline at which the $x$-job ends, as all green jobs have even length and all real deadlines and double deadlines are even. Blue, red, and black jobs are created to meet all deadlines on the processor that is not currently executing green jobs. The blow-up of the elements of $A\cup B\cup S$ ensures that these jobs cannot equate the green jobs (except the black jobs whose lengths might equal the lengths of green $y$-jobs, but, \mywlog, one can assign them to the last part of each segment of the processor not executing a green job). That none of these jobs is executed between two green jobs within a segment is ensured as the sum of all green job lengths equals the sum of the lengths of the segments. This summarizes the reduction and gives the reasons for the different elements of the construction. Let us now turn to the formal reduction.

Let $(A,B,S)$ be an instance for \dNMTS. We suppose, \mywlog, that $\sum_{i=1}^m s_i = \sum_{x\in A \cup B} x$, otherwise $(A,B,S)$ is trivially a \textsc{No}-instance for \dNMTS. Let $A=\set{a_1, \hdots, a_m}$ and $B=\set{b_1, \hdots, b_m}$. We also assume, \mywlog, that $a_i < a_{i+1}$, $b_i < b_{i+1}$, $s_i < s_{i+1}$, for all $i \in \set{1, \hdots, m-1}$, that $a_m < b_m$, and that $s_m\le a_m+b_m$.

First, we construct an equivalent instance $(X,Y,Z)$ for \dNMTS. Each of $X:=\set{x_1, \hdots, \linebreak[1] x_n}$, $Y:=\set{y_1, \hdots, y_n}$, and $Z:=\set{z_1, \hdots, z_n}$ has $n:=m+1$ elements:
\begin{align*}
 &\text{for } i\in \set{1, \hdots, n-1},&&\\
 &x_i := 2 \cdot (a_i + (b_m + 2)),&&\quad\quad\quad			x_n := 2 \cdot (a_m + 1 + (b_m + 2)),\\
 &y_i := 2 \cdot (b_i + 3 \cdot (b_m + 2)),&&\quad\quad\quad	y_n := 2 \cdot (b_m + 1 + 3 \cdot (b_m + 2)),\\
 &z_i := 2 \cdot (s_i + 4 \cdot (b_m + 2)),\text{ and}&&\quad\quad\quad	z_n := 2 \cdot (a_m + b_m + 2 + 4 \cdot (b_m + 2)).
\end{align*}
The elements of $X$, $Y$, and $Z$ have the following properties.

\begin{prop}\label{prop:even}
 Each element of $X\cup Y\cup Z$ is an even positive integer.
\end{prop}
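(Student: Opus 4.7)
The plan is to verify the property by direct inspection of the six defining equations. In every one of the formulas for $x_i$, $x_n$, $y_i$, $y_n$, $z_i$, and $z_n$, the element is written as $2 \cdot (\text{expression})$, so divisibility by $2$ is immediate by construction. It then suffices to check that each expression inside the outer parentheses is a positive integer, which will be done by observing that every summand appearing there is a positive integer.

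For positivity I would invoke the standing assumption that $(A,B,S)$ is a \dNMTS instance, so $a_i, b_i, s_i \in \mathbb{N}$, where $\mathbb{N}$ denotes the strictly positive integers as fixed at the start of Section~\ref{sec:intro}. In particular $b_m \geq 1$, hence $b_m + 2 \geq 3$, and therefore each of the expressions $a_i + (b_m+2)$, $a_m + 1 + (b_m+2)$, $b_i + 3(b_m+2)$, $b_m + 1 + 3(b_m+2)$, $s_i + 4(b_m+2)$, and $a_m + b_m + 2 + 4(b_m+2)$ is a sum of positive integers and so is itself a positive integer. Multiplication by $2$ then yields an even positive integer, as claimed.

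There is no real obstacle here: the property is a one-line bookkeeping observation, and its role is simply to be cited in later lemmas of the reduction, where the parity of deadlines and of green-job lengths is used to separate real deadlines from the fake deadlines placed one unit before them.
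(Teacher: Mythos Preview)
Your argument is correct and matches the paper's own treatment: the paper simply remarks that Properties~\ref{prop:even}--\ref{prop:rangexyz} ``easily follow by construction of $X$, $Y$, and $Z$,'' which is exactly the direct inspection you carry out. Nothing more is needed.
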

\begin{prop}\label{prop:increasing}
 For every $i\in \set{1, \hdots, n-1}$, we have that $x_i < x_{i+1}$, that $y_i < y_{i+1}$, and that $z_i < z_{i+1}$.
\end{prop}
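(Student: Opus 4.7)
The plan is to verify each of the three strict inequalities by case analysis on $i$, splitting according to whether $i \le n-2$ (in which case both endpoints follow the ``generic'' formula) or $i = n-1$ (in which case $x_{i+1}, y_{i+1}, z_{i+1}$ are the special ``top'' elements $x_n, y_n, z_n$). In each case the inequality should reduce either to one of the strict orderings $a_i<a_{i+1}$, $b_i<b_{i+1}$, $s_i<s_{i+1}$ that were assumed on the input of \dNMTS, or to the trivial arithmetic fact that $k<k+1$, or to the global ``size'' assumption $s_m \le a_m+b_m$.

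For the $x$-coordinates, when $i \le n-2 = m-1$ the formula $x_i = 2(a_i + (b_m+2))$ applies to both $x_i$ and $x_{i+1}$, so cancelling the common summand $2(b_m+2)$ leaves exactly the input inequality $a_i<a_{i+1}$. For the boundary case $i=n-1=m$, I compare $x_m = 2(a_m+(b_m+2))$ with $x_n = 2(a_m+1+(b_m+2))$, so the strict inequality is $a_m < a_m+1$. The argument for the $y$-coordinates is identical after substituting $b$ for $a$ and $3(b_m+2)$ for $(b_m+2)$: for $i\le m-1$ it reduces to $b_i<b_{i+1}$, and for $i=m$ to $b_m<b_m+1$.

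The only step with any real content is the $z$-inequality at the boundary. For $i \le m-1$, cancelling $2\cdot 4(b_m+2)$ leaves $s_i < s_{i+1}$, which is given. For $i=m$, I have to show $z_m < z_n$, i.e.,
\begin{equation*}
2\bigl(s_m + 4(b_m+2)\bigr) \;<\; 2\bigl(a_m + b_m + 2 + 4(b_m+2)\bigr),
\end{equation*}
which after cancellation is $s_m < a_m+b_m+2$. This is where the global assumption $s_m \le a_m+b_m$ (stated \mywlog\ right before the definition of $X,Y,Z$) is used; together with the trivial $a_m+b_m < a_m+b_m+2$ it yields the required strict inequality. No genuine obstacle is expected; the only thing to be careful about is to make sure each boundary case invokes exactly one of the \mywlog\ assumptions made on the original \dNMTS instance.
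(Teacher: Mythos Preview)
Your proposal is correct and is exactly the straightforward verification the paper has in mind; the paper itself does not spell out a proof but simply states that Properties~\ref{prop:even}--\ref{prop:rangexyz} ``easily follow by construction of $X$, $Y$, and $Z$.'' Your case split and the use of the \mywlog\ assumptions $a_i<a_{i+1}$, $b_i<b_{i+1}$, $s_i<s_{i+1}$, and $s_m\le a_m+b_m$ are precisely what is needed.
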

\begin{prop}\label{prop:ybiggerx}\label{prop:sumbig}\label{prop:ybig}\label{prop:rangexyz}
 For every $i\in \set{1, \hdots, n}$, we have
 \begin{align*}
  2\cdot b_m + 4  &\le x_i \le 4 \cdot b_m + 4,\\
  6\cdot b_m + 12 &\le y_i \le 8 \cdot b_m + 14, \text{ and}\\
  8\cdot b_m + 16 &\le z_i \le 12 \cdot b_m + 18.\\
 \end{align*}
\end{prop}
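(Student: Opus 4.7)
\medskip

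\noindent\textbf{Proof plan.} The six inequalities all reduce to elementary arithmetic on the input parameters once the defining formulas of $x_i$, $y_i$, and $z_i$ are unfolded, so my plan is to first collect the easy bounds on $a_i$, $b_i$, $s_i$ and then substitute them into the definitions. From the normalisation hypotheses stated just above the construction (the three sequences are strictly increasing in $i$, $a_m<b_m$, and $s_m\le a_m+b_m$), together with the fact that all quantities are positive integers, I would record the auxiliary estimates
\begin{align*}
 1 \le a_i \le a_m \le b_m-1, \qquad 1 \le b_i \le b_m, \qquad 1 \le s_i \le s_m \le 2 b_m - 1,
\end{align*}
valid for every $i\in\{1,\ldots,m\}$. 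Since $n=m+1$, these are exactly the indices for which the first (non-exceptional) formulas for $x_i$, $y_i$, $z_i$ apply.

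With those bounds in hand I would split each of the three double inequalities into the two cases $i\le n-1$ and $i=n$, expanding the definitions and plugging in the extreme admissible value of the free parameter. For example, for $i<n$ one has $x_i=2a_i+2b_m+4$, so the bounds $1\le a_i\le b_m-1$ yield the advertised interval $[2b_m+4,\,4b_m+4]$; the case $i=n$ is disposed of in the same way from $x_n=2a_m+2b_m+6$ and $1\le a_m\le b_m-1$. The analogous two-case computations settle the $y$-bounds via $y_i=2b_i+6b_m+12$ and $y_n=8b_m+14$, and the $z$-bounds via $z_i=2s_i+8b_m+16$ and $z_n=2a_m+10b_m+20$.

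There is no real obstacle in the argument; the only point worth care is the strengthening of the strict inequality $a_m<b_m$ to the integer inequality $a_m\le b_m-1$, which in turn lets $s_m\le a_m+b_m$ be sharpened to $s_m\le 2b_m-1$. This last step is exactly the slack required to reach the upper constant $12b_m+18$ in the $z$-bound for $i<n$. All six inequalities then follow by direct substitution, and the property reduces to a routine sanity check that the blow-up factors chosen in the construction keep the ranges of $X$, $Y$, and $Z$ well separated.
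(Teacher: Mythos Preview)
Your proposal is correct and matches the paper's approach: the paper simply states that Properties~1--3 ``easily follow by construction of $X$, $Y$, and $Z$'' without giving further detail, and your case-by-case substitution is exactly the routine verification this points to. One minor remark: your comment that the integer sharpening $s_m\le 2b_m-1$ is ``exactly the slack required'' for the $z$-upper-bound when $i<n$ is slightly overstated, since even the weaker $s_m\le 2b_m$ would give $z_i\le 12b_m+16\le 12b_m+18$; the place where $a_m\le b_m-1$ is actually tight is the computation $z_n=2a_m+10b_m+20\le 12b_m+18$.
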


\noindent
In particular, Property \ref{prop:rangexyz} implies that $y_1 > x_n$, that $z_1 > y_n$, and that $2\cdot y_1 > z_n$. Properties \ref{prop:even}--\ref{prop:rangexyz} easily follow by construction of $X, Y$, and $Z$.
%
%
%
%
\begin{prop}\label{prop:sumn}
 If $k$ and $\ell$ are integers such that $x_k+y_\ell=z_n$, then $k=\ell=n$.
\end{prop}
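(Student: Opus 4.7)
The plan is to unwrap the definitions of $x_k$, $y_\ell$, and $z_n$ and show that the required equation forces both indices to hit their maximum possible value simultaneously, and that this maximum is attained only at index $n$.

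Concretely, I would first introduce a uniform notation by writing $x_k = 2\alpha_k + 2(b_m+2)$ and $y_\ell = 2\beta_\ell + 6(b_m+2)$, where $\alpha_k = a_k$ for $k < n$ and $\alpha_n = a_m + 1$, and similarly $\beta_\ell = b_\ell$ for $\ell < n$ and $\beta_n = b_m + 1$. Since $a_i < a_{i+1}$ and $b_i < b_{i+1}$ by the assumptions on $(A,B,S)$ (and since $n = m+1$, so $a_{n-1} = a_m$ and $b_{n-1} = b_m$), we get the crucial bounds $\alpha_k \le a_m + 1$ and $\beta_\ell \le b_m + 1$, with equality if and only if $k = n$ (respectively $\ell = n$).

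Next, I would substitute this into the hypothesis $x_k + y_\ell = z_n$. Since $z_n = 2(a_m + b_m + 2) + 8(b_m+2)$ and the sum $x_k + y_\ell$ equals $2(\alpha_k + \beta_\ell) + 8(b_m+2)$, the equation collapses to
\begin{equation*}
\alpha_k + \beta_\ell \;=\; a_m + b_m + 2.
\end{equation*}
Combined with the upper bounds $\alpha_k \le a_m + 1$ and $\beta_\ell \le b_m + 1$, equality is forced in both inequalities, and by the strictness of the bounds for $k < n$ and $\ell < n$, this forces $k = \ell = n$.

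There is essentially no obstacle here: the whole point of the additive shifts $2(b_m+2)$ and $3(b_m+2)$ used to build $X$, $Y$, $Z$ is to make the $(b_m+2)$-coefficients of $x_k + y_\ell$ and $z_n$ match automatically, so that the equation reduces cleanly to a statement about $a_m$, $b_m$, and the ``$+1$'' bumps present only at the $n$-th element. The only care needed is to remember to treat the index $n$ separately from indices $i < n$, which is handled cleanly by introducing $\alpha_k$ and $\beta_\ell$.
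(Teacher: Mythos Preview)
Your proof is correct and is essentially the same argument as the paper's, just spelled out in more detail. The paper's one-line justification is that $x_n$ and $y_n$ are the only elements of $X$ and $Y$ large enough to sum to $z_n$; your introduction of $\alpha_k$ and $\beta_\ell$ makes this precise by reducing the equation to $\alpha_k+\beta_\ell=a_m+b_m+2$ and then invoking the strict upper bounds.
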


\noindent
Property \ref{prop:sumn} holds because $x_n$ and $y_n$ are the only elements of $X$ and $Y$, \longversion{respectively}\shortversion{resp.}, that are large enough to sum to $z_n$.

\begin{prop}\label{prop:oneXoneY}
 Let $p,q\in X\cup Y$, $p\le q$, and $z\in Z$. If $p+q=z$, then $p\in X$ and $q\in Y$.
\end{prop}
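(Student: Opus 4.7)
The plan is to carry out a straightforward case analysis on which of the sets $X$ and $Y$ the two elements $p$ and $q$ belong to, and to eliminate every case but $p\in X$, $q\in Y$ by comparing the sum $p+q$ to the range of values attainable by elements of $Z$. All the needed bounds are already packaged into Property \ref{prop:rangexyz} and its consequences, so no new estimates are required; I just need to combine them correctly.

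Concretely, I would proceed as follows. First, I would rule out $p,q\in X$: by the upper bound $x_i\le 4b_m+4$, we get $p+q\le 8b_m+8$, which is strictly less than the lower bound $8b_m+16$ on any element of $Z$, contradicting $p+q=z$. Next, I would rule out $p,q\in Y$: the lower bound $y_i\ge 6b_m+12$ gives $p+q\ge 12b_m+24$, which strictly exceeds the upper bound $12b_m+18$ on elements of $Z$. Finally, the case $p\in Y$, $q\in X$ is excluded by the ordering assumption $p\le q$ together with the remark following Property \ref{prop:rangexyz} that $y_1>x_n$: any element of $Y$ is strictly larger than any element of $X$, so $p\in Y$ and $q\in X$ would force $p>q$.

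The only remaining possibility is $p\in X$ and $q\in Y$, which is exactly the conclusion. There is no real obstacle in this proof; the only thing to be careful about is to invoke the lower and upper bounds of Property \ref{prop:rangexyz} in the right direction in each case, so that the comparisons with the range of $Z$ are strict and therefore actually produce contradictions.
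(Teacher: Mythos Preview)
Your argument is correct and is essentially the same as the paper's: the paper simply remarks that, by Property~\ref{prop:ybig}, any two $X$-elements sum to less than every $z\in Z$ and any two $Y$-elements sum to more, which is exactly the case analysis you carry out explicitly (your third case being implicit in the hypothesis $p\le q$).
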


\noindent
By Property~\ref{prop:ybig}, the sum of any two $X$-elements is smaller and the sum of any two $Y$-elements is larger than any element of $Z$.

\medskip
\noindent
For our \SCD instance, we create the following deadlines:
\begin{itemize}
 \item \emph{real} deadlines: $r_{i,j} := x_i + \sum_{k=1}^j z_k$, for each $j \in \set{0, \hdots, n-1}$ and each $i \in \set{1, \hdots, n}$,
 \item \emph{fake} deadlines: $f_{i,j} := r_{i,j}-1$, for each $j \in \set{0, \hdots, n-1}$ and each $i \in \set{1, \hdots, n}$, and
 \item \emph{sum} deadlines: two deadlines $ds_{1,j} := ds_{2,j} := \sum_{k=1}^j z_k$, for each $j \in \set{1, \hdots, n}$.
\end{itemize}
The sum deadlines we just defined partition the interval $[0,ds_{1,n}]$ into $n$ segments $I_j:=[ds_{1,j-1},ds_{1,j}]$, $j=1,\ldots n$, where for convenience, we let $ds_{1,0}=0$.
We create jobs with the following lengths, where $x_0=0$ :
\begin{itemize}
 \item green x-jobs: $x_i$, for each $i \in \set{1, \hdots, n}$,
 \item green y-jobs: $y_i$, for each $i \in \set{1, \hdots, n}$,
 \item blue jobs: $n\cdot (n-1)$ times a job of length $1$,
 \item red fill jobs: $n-1$ times a job of length $x_{i} - 1 - x_{i-1}$, for each $i \in \set{1, \hdots, n}$,
 \item red overlap jobs: $x_{i} - x_{i-1}$, for each $i \in \set{1, \hdots, n}$,
 \item black fill jobs: $z_i - x_n$ for $i \in \set{1, \hdots, n-1}$, and
 \item a black overlap job: $z_n - x_n + 1$.
\end{itemize}


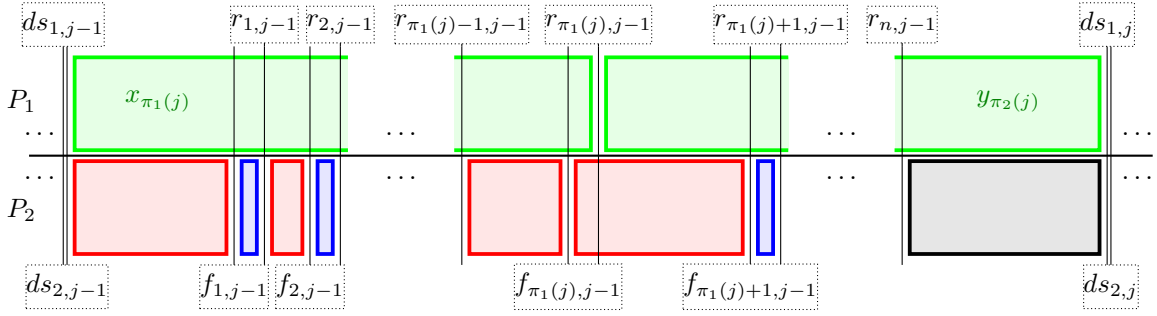
\begin{figure}[tb]
\begin{center}
\begin{tikzpicture}[xscale=2,yscale=1.45]

  \fill[green!10] (0.25,0.05) rectangle (2.05,0.9);
  \draw[green,line width=0.5mm] (2.05,0.9)-- (0.25,0.9)-- (0.25,0.05)-- (2.05,0.05);
  \fill[green!10] (2.75,0.05) rectangle (3.65,0.9);
  \draw[green,line width=0.5mm] (2.75,0.9)-- (3.65,0.9)-- (3.65,0.05)-- (2.75,0.05);
  \fill[green!10] (3.75,0.05) rectangle (4.95,0.9);
  \draw[green,line width=0.5mm] (4.95,0.9)-- (3.75,0.9)-- (3.75,0.05)-- (4.95,0.05);
  \fill[green!10] (5.65,0.05) rectangle (7,0.9);
  \draw[green,line width=0.5mm] (5.65,0.9)-- (7,0.9)-- (7,0.05)-- (5.65,0.05);
  \filldraw[preaction={fill=red!10}] [red!10, draw=red, line width=0.5mm]  (0.25,-0.05) rectangle (1.25,-0.9);
  \filldraw[preaction={fill=blue!10}] [blue!10, draw=blue, line width=0.5mm]  (1.35,-0.05) rectangle (1.45,-0.9);
  \filldraw[preaction={fill=red!10}] [red!10, draw=red, line width=0.5mm]  (1.55,-0.05) rectangle (1.75,-0.9);
  \filldraw[preaction={fill=blue!10}] [blue!10, draw=blue, line width=0.5mm]   (1.85,-0.05) rectangle (1.95,-0.9);
  \filldraw[preaction={fill=red!10}] [red!10, draw=red, line width=0.5mm]  (2.85,-0.05) rectangle (3.45,-0.9);
  \filldraw[preaction={fill=red!10}] [red!10, draw=red, line width=0.5mm]  (3.55,-0.05) rectangle (4.65,-0.9);
  \filldraw[preaction={fill=blue!10}] [blue!10, draw=blue, line width=0.5mm]  (4.75,-0.05) rectangle (4.85,-0.9);
  \filldraw[preaction={fill=black!10}] [black!10, draw=black, line width=0.5mm]  (5.75,-0.05) rectangle (7,-0.9);

 \draw[thick] (-0.05,0)--(7.35,0);
 \draw (0.2,1)--(0.2,-1) (0.175,1)--(0.175,-1);
 \draw (1.5,1)--(1.5,-1) (1.3,1)--(1.3,-1)
            (2,1)--(2,-1) (1.8,1)--(1.8,-1)
            (2.8,1)--(2.8,-1)
            (3.7,1)--(3.7,-1) (3.5,1)--(3.5,-1)
            (4.9,1)--(4.9,-1) (4.7,1)--(4.7,-1)
            (5.7,1)--(5.7,-1)
            (7.075,1)--(7.075,-1) (7.05,1)--(7.05,-1);

\draw (1.5,1.2) node [draw,thin,densely dotted,fill=white,inner xsep=0pt] {$r_{1,j-1}$} -- (1.5,1);
\draw (2,1.2) node [draw,thin,densely dotted,fill=white,inner xsep=0pt] {$r_{2,j-1}$} -- (2,1);
\draw (2.8,1.2) node [draw,thin,densely dotted,fill=white,inner xsep=0pt] {$r_{\pi_1(j)-1,j-1}$} -- (2.8,1);
\draw (3.7,1.2) node [draw,thin,densely dotted,fill=white,inner xsep=0pt] {$r_{\pi_1(j),j-1}$} -- (3.7,1);
\draw (4.9,1.2) node [draw,thin,densely dotted,fill=white,inner xsep=0pt] {$r_{\pi_1(j)+1,j-1}$} -- (4.9,1);
\draw (5.7,1.2) node [draw,thin,densely dotted,fill=white,inner xsep=0pt] {$r_{n,j-1}$} -- (5.7,1);

\draw (1.3,-1.2) node [draw,thin,densely dotted,fill=white,inner xsep=0pt] {$f_{1,j-1}$} -- (1.3,-1);
\draw (1.8,-1.2) node [draw,thin,densely dotted,fill=white,inner xsep=0pt] {$f_{2,j-1}$} -- (1.8,-1);
\draw (3.5,-1.2) node [draw,thin,densely dotted,fill=white,inner xsep=0pt] {$f_{\pi_1(j),j-1}$} -- (3.5,-1);
\draw (4.7,-1.2) node [draw,thin,densely dotted,fill=white,inner xsep=0pt] {$f_{\pi_1(j)+1,j-1}$} -- (4.7,-1);

\draw (0.175,1.2) node [draw,thin,densely dotted,fill=white,inner xsep=0pt] {$ds_{1,j-1}$} -- (0.175,1);
\draw (0.2,-1.2) node [draw,thin,densely dotted,fill=white,inner xsep=0pt] {$ds_{2,j-1}$} -- (0.2,-1);
\draw (7.05,1.2) node [draw,thin,densely dotted,fill=white,inner xsep=0pt] {$ds_{1,j}$} -- (7.05,1);
\draw (7.075,-1.2) node [draw,thin,densely dotted,fill=white,inner xsep=0pt] {$ds_{2,j}$} -- (7.075,-1);

\draw[thick,green!50!black] (0.8, 0.5) node {$x_{\pi_1(j)}$};
\draw[thick,green!50!black] (6.4, 0.5) node {$y_{\pi_2(j)}$};

\draw[very thick] (7.25,0.2) node {$\mathbf{\hdots}$};
\draw[very thick] (7.25,-0.2) node {$\mathbf{\hdots}$};
\draw[very thick] (0.02,0.2) node {$\mathbf{\hdots}$};
\draw[very thick] (0.02,-0.2) node {$\mathbf{\hdots}$};
 
\draw[very thick] (2.4,0.2) node {$\mathbf{\hdots}$};
\draw[very thick] (2.4,-0.2) node {$\mathbf{\hdots}$};
\draw[very thick] (5.3,0.2) node {$\mathbf{\hdots}$};
\draw[very thick] (5.3,-0.2) node {$\mathbf{\hdots}$};

\draw (-0.1,0.5) node {$P_1$};
\draw (-0.1,-0.5) node {$P_2$};
\end{tikzpicture}
\caption{\label{fig:scd}How jobs are assigned to processors in the \SCD instance in segment $j < n$.}
\end{center}
\end{figure}

\noindent
To illustrate these definitions, we start by showing that if we have
a \textsc{Yes}-instance $(X,Y,Z)$ for \dNMTS, then we have an \SCD \textsc{Yes}-instance as well.
Let $C_1, C_2, \hdots, C_n$ be $n$ couples such that $C_j = \set{x_{\pi_1(j)},y_{\pi_2(j)}}$ and $x_{\pi_1(j)}+y_{\pi_2(j)}=z_j$, $j\in \set{1, \hdots, n}$, for two permutations $\pi_1$ and $\pi_2$ of the set $\set{1, \hdots, n}$. We construct a solution for \SCD. Let us construct the schedules for $P_1$ and $P_2$. For each $j\in \set{1, \hdots, n-1}$,
\begin{itemize}
 \item assign the green $x$-job $x_{\pi_1(j)}$ to the interval $[ds_{1,j-1},r_{\pi_1(j),j-1}]$ of $P_1$,
 \item assign the green $y$-job $y_{\pi_2(j)}$ to the interval $[r_{\pi_1(j),j-1},ds_{1,j}]$ of $P_1$,
 \item assign a red fill job of length $x_1-1$ to the interval $[ds_{1,j-1},f_{1,j-1}]$ of $P_2$,
 \item for every $i\in \set{1,\hdots,n-1} \setminus \pi_1(j)$, assign a red fill job of length $x_{i+1}-1-x_{i}$ to the interval $[r_{i,j-1},f_{i+1,j-1}]$ of $P_2$,
 \item for every $i\in \set{1,\hdots,n} \setminus \pi_1(j)$, assign a blue job to the interval $[f_{i,j-1},r_{i,j-1}]$ of $P_2$,
 \item assign a red overlap job of length $x_{\pi_1(j)+1}-x_{\pi_1(j)}$ to the interval $[f_{\pi_1(j),j-1},\linebreak[1] f_{\pi_1(j)+1,j-1}]$ of $P_2$, and
 \item assign a black fill job of length $z_j-x_n$ to the interval $[r_{n,j-1},ds_{1,j}]$ of $P_2$.
\end{itemize}
It only remains to assign jobs to the last segment. The last segment of $P_1$ contains the green $x$-job $x_n$ and the green $y$-job $y_n$, in this order. The last segment of $P_2$ contains a red fill job of length $x_1-1$, a blue job, a red fill job of length $x_2-1-x_1$, a blue job, $\hdots$, a red fill job of length $x_n-1-x_{n-1}$, and the black overlap job, in this order.
See Fig. \ref{fig:scd} for an illustration.
%

\smallskip

Now suppose the \SCD instance is a \textsc{Yes}-instance. We will show some structural properties of any valid assignment of jobs to the processors, which will help to extract a solution for our original \dNMTS instance.
We will show that in each segment $I_j$,
any valid solution for the \SCD instance has exactly one green x-job $x_k$ and exactly one green y-job $y_\ell$, and that 
$x_k$ and $y_\ell$ sum to $z_j$.

Consider a valid assignment of the jobs to the processors $P_1$ and $P_2$. As two jobs with the same length are interchangeable, when we encounter a job whose length belongs to more than one category (for example ``black fill'' and ``green y'') we may choose in this case, \mywlog, to which category the job belongs.

\begin{claimLNCS}\label{cl:blackfill}
A black fill job is assigned to each interval $[r_{n,j},ds_{1,j+1}]$ with $j\in \set{0,\hdots,n-2}$.
\end{claimLNCS}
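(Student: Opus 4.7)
The plan is to argue that, on one of the two processors, the interval $[r_{n,j},ds_{1,j+1}]$ is covered by a single job of length $z_{j+1}-x_n$, and then to relabel this job as a black fill job using the convention stated in the paragraph preceding the claim.

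First, I would observe that by the definitions of the deadlines the open interval $(r_{n,j},ds_{1,j+1})$ contains no deadline at all, while $ds_{1,j+1}=ds_{2,j+1}$ is a double deadline at which both processors must finish a job, and $r_{n,j}$ is a deadline at which at least one processor, call it $P^\star$, finishes a job. Hence, on $P^\star$, the schedule restricted to $[r_{n,j},ds_{1,j+1}]$ is a sequence of jobs starting at $r_{n,j}$, ending at $ds_{1,j+1}$, and of total length $z_{j+1}-x_n$.

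Next, I would show that this sequence consists of a single job. By Property~\ref{prop:rangexyz} only green y-jobs, black fill jobs, and the black overlap job have length at least $z_1-x_n\ge 4b_m+14$, while all other jobs (green x, red fill, red overlap, blue) have length at most $x_n\le 4b_m+4 < z_1-x_n$. Since $j+1<n$ we have $z_n-x_n+1>z_{j+1}-x_n$, so the black overlap job does not fit inside $[r_{n,j},ds_{1,j+1}]$. From here I would use Properties~\ref{prop:sumn} and~\ref{prop:oneXoneY}, which tightly constrain which sums $p+q$ take values in $Z$, together with the parity of $z_{j+1}-x_n$ (it is even by Property~\ref{prop:even}) and a counting check on the remaining blue, red and black jobs available to cover the rest of segment $I_{j+1}$, to rule out every decomposition of $z_{j+1}-x_n$ into two or more achievable job lengths on $P^\star$.

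Once the single-job property is established, the lone job occupying $[r_{n,j},ds_{1,j+1}]$ has length $z_{j+1}-x_n$, which is exactly the length of the $(j+1)$-th black fill job; by the labeling convention stated just before the claim, we may label this job as a black fill job, which proves the claim. The main obstacle is the case analysis in the second step, since $z_{j+1}-x_n$ admits several formal decompositions into sums of achievable job lengths (for instance a green y-job plus smaller jobs, two black fill jobs of smaller index, or a green x-job plus a shorter black fill job), and each has to be excluded by combining the length bounds above with the uniqueness of the identity $x_n+y_n=z_n$ from Property~\ref{prop:sumn} and the pairing structure in Property~\ref{prop:oneXoneY}.
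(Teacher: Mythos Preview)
Your first step is right: both processors end at the double deadline $ds_{1,j+1}=ds_{2,j+1}$, some processor $P^\star$ ends a job at the immediately preceding deadline $r_{n,j}$, and the open interval in between contains no deadline. But the second step---your proposed case analysis to exclude multi-job decompositions of $z_{j+1}-x_n$---is unnecessary, and the tools you cite do not fit well: Properties~\ref{prop:sumn} and~\ref{prop:oneXoneY} constrain sums $p+q$ with $p,q\in X\cup Y$ landing in $Z$, not decompositions of the quantity $z_{j+1}-x_n$, and the ``counting check on the remaining blue, red and black jobs'' is left entirely vague. You yourself flag this as the main obstacle, and indeed it is not clear the analysis can be completed along those lines.

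The paper's proof bypasses all of this via a single observation you overlook: in any \SCD instance the number of jobs equals the number of deadlines by definition, so the requirement that some job end at each deadline forces a bijection---\emph{every} job ends at a deadline. Combined with your own remark that $(r_{n,j},ds_{1,j+1})$ contains no deadline, this immediately yields that the job $P^\star$ starts at $r_{n,j}$ is the very job it finishes at $ds_{1,j+1}$; there is no room for a ``sequence'' of several jobs. Its length is therefore $ds_{1,j+1}-r_{n,j}=z_{j+1}-x_n$, and by the labeling convention it is, without loss of generality, a black fill job. That is the paper's entire argument.
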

\begin{proof}
 Let $j\in \set{0,\hdots,n-2}$. Two jobs must finish at the double deadline $ds_{1,j+1}, ds_{2,j+1}$. One of these must start at $r_{n,j}$ and thus has length $ds_{1,j+1}-r_{n,j} = \sum_{k=1}^{j+1} z_k - x_n - \sum_{k=1}^j z_k = z_{j+1} - x_n$. So this job is, \mywlog, a black fill job.
\end{proof}

\noindent
This uses up all black fill jobs.

\begin{claimLNCS}\label{cl:lastgreeny}
 The green y-job $y_n$ is assigned to the interval $[r_{n,n-1},ds_{1,n}]$.
\end{claimLNCS}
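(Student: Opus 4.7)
The plan is to determine the two jobs that end at the double deadline $ds_{1,n}=ds_{2,n}$ by examining the feasible starting positions for a job ending at this time. Since two deadlines coincide at $ds_{1,n}$, each of the two processors must finish a job exactly there. For such a job $J$ of length $L$ on some processor, its start is $s=ds_{1,n}-L$. By Property~\ref{prop:rangexyz} the maximum job length is $y_n+1=z_n-x_n+1<z_1\le z_n$, so $s>ds_{1,n-1}$ and $J$ lies entirely within segment $I_n$.

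I would then work under the hypothesis that $s$ coincides with a deadline in segment $I_n$, that is, $s\in\{ds_{1,n-1}\}\cup\{r_{i,n-1},f_{i,n-1}:i\in\{1,\dots,n\}\}$, and for each candidate check whether $L=ds_{1,n}-s$ matches an available job length. Claim~\ref{cl:blackfill} has already accounted for all black fill jobs, so none of them is available. The start $s=ds_{1,n-1}$ gives $L=z_n$, which exceeds every available job length by Property~\ref{prop:rangexyz}. For $s=r_{i,n-1}$ with $i<n$ we get $L=z_n-x_i$: this is even by Property~\ref{prop:even} and strictly greater than $y_n$, so it could only match a green y-job, but $L=y_j$ would require $x_i+y_j=z_n$, which by Property~\ref{prop:sumn} forces $i=j=n$. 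For $s=f_{i,n-1}$ with $i<n$ we get $L=z_n-x_i+1>y_n+1$, again too large. Only two options remain: $s=r_{n,n-1}$ with $L=y_n$, realized uniquely by the green y-job $y_n$, and $s=f_{n,n-1}$ with $L=y_n+1$, realized uniquely by the black overlap. Since exactly one copy of each exists and two jobs must end at $ds_{1,n}$, both options are realized, one per processor; in particular the green y-job $y_n$ fills $[r_{n,n-1},ds_{1,n}]$, as claimed.

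The main obstacle is to justify that $s$ is always a deadline, because a priori the tail of segment $I_n$ could split into several smaller jobs on a processor (for instance $y_{n-1}$ followed by two blue jobs, since $y_n=y_{n-1}+2$). I would exclude such decompositions by a parity-and-scarcity argument. Since $ds_{1,n-1}$ and $ds_{1,n}$ are even and $z_n$ is even by Property~\ref{prop:even}, the number of odd-length jobs placed on each processor in segment $I_n$ is even; the only odd-length jobs are blue (length $1$), red fill (odd by construction, each of length at most $x_n-1$), and the black overlap (length $y_n+1$), so by Property~\ref{prop:rangexyz} the black overlap is the only odd-length job large enough to cover a substantial part of $[r_{n,n-1},ds_{1,n}]$. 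Combined with the observation, following from Property~\ref{prop:sumn} and Property~\ref{prop:oneXoneY}, that every green y-job $y_j$ with $j<n$ is needed to pair with some $x_i$ in some segment $I_{j'}$ with $j'<n$ and so cannot also be inserted into segment $I_n$, no multi-job decomposition of the trailing $y_n$ units is compatible with the multiset of available jobs and the parity constraints, leaving only the single-job configuration identified above.
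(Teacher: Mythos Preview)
Your case analysis is sound once you assume the start $s$ of each job ending at $ds_{1,n}$ is a deadline, but your justification of that hypothesis is where the argument breaks. The parity observation is correct, yet the assertion that ``every green $y$-job $y_j$ with $j<n$ is needed to pair with some $x_i$ in some segment $I_{j'}$ with $j'<n$'' does not follow from Properties~\ref{prop:sumn} and~\ref{prop:oneXoneY}: those are arithmetic facts about which pairs $(x_k,y_\ell)$ can sum to a given $z$-value, not scheduling constraints forcing each $y_j$ into a particular segment. What you are really invoking is the content of Claim~\ref{cl:ytask}, which is proved \emph{after} Claim~\ref{cl:lastgreeny} and whose setup explicitly relies on Claims~\ref{cl:blackfill}--\ref{cl:blackoverlap} having already consumed all black jobs and $y_n$. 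So the fix is circular. (Incidentally, your illustrative decomposition ``$y_{n-1}$ followed by two blue jobs'' is impossible since $y_n-y_{n-1}=2(b_m+1-b_{m-1})\ge 4$, but this does not rescue the argument: other multi-job fillings remain unexcluded by your sketch.)

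The paper sidesteps the obstacle entirely by reversing the direction of the argument. Rather than asking where a job ending at $ds_{1,n}$ might start, it looks at the processor that finishes a job at the deadline $r_{n,n-1}$ (one must, since $r_{n,n-1}$ is a deadline). That processor starts its next job at $r_{n,n-1}$; because there is no deadline strictly between $r_{n,n-1}$ and the double deadline $ds_{1,n}=ds_{2,n}$, and both processors must end a job at the double deadline, this next job fills exactly $[r_{n,n-1},ds_{1,n}]$. Its length is $z_n-x_n$, and by Property~\ref{prop:sumn} the only job of this length is $y_n$. For this particular job the start is a deadline by construction, so no multi-job decomposition ever needs to be excluded; the other job ending at $ds_{1,n}$ is handled separately in Claim~\ref{cl:blackoverlap}.
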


\begin{proof}
 As in the previous proof, one job must be assigned to this interval, whose length is $\sum_{k=1}^n z_k - x_n - \sum_{k=1}^{n-1} z_k = z_n-x_n$, which is $y_n$ by Property~\ref{prop:sumn}. Thus, the green y-job $y_n$ is assigned to the interval $[r_{n,n-1},ds_{1,n}]$.
\end{proof}



\begin{claimLNCS}\label{cl:blackoverlap}
 The black overlap job is assigned to the interval $[f_{n,n-1},ds_{1,n}]$.
\end{claimLNCS}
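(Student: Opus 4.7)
The plan is to show that the interval $[f_{n,n-1},ds_{1,n}]$ on $P_2$ must be covered by a single job whose length $z_n-x_n+1$ identifies it uniquely as the black overlap. I will use Claim~\ref{cl:lastgreeny} as the anchor, then exploit the deadline structure around $f_{n,n-1}$ and $ds_{1,n}$ to constrain the schedule on $P_2$.

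First, by Claim~\ref{cl:lastgreeny}, $y_n$ occupies $[r_{n,n-1},ds_{1,n}]$; \mywlog\ on $P_1$. The job on $P_1$ immediately preceding $y_n$ ends at $r_{n,n-1}$ and is therefore still executing at $f_{n,n-1}=r_{n,n-1}-1$; hence no job on $P_1$ ends at the fake deadline $f_{n,n-1}$, and so some job on $P_2$ must end there. Moreover, since $ds_{1,n}=ds_{2,n}$ represents two coincident deadlines and there are only two processors, both $P_1$ and $P_2$ must end a job at $ds_{1,n}$.

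Now let $J$ be the job on $P_2$ ending at $ds_{1,n}$ and let $s$ be its start time. All event times are integers, and the open interval $(f_{n,n-1},r_{n,n-1})$ of length $1$ contains no integer, so either $s=f_{n,n-1}$ or $s\ge r_{n,n-1}$. If $s=f_{n,n-1}$ then $|J|=z_n-x_n+1$; by inspection via Properties~\ref{prop:even}, \ref{prop:increasing}, and \ref{prop:rangexyz}, the black overlap is the only job of this length, which is the desired conclusion. If $s=r_{n,n-1}$ then $|J|=z_n-x_n=y_n$ by Property~\ref{prop:sumn}, which is impossible because $y_n$ is already on $P_1$ and no other job has length $y_n$ (again by Properties~\ref{prop:even}, \ref{prop:increasing}, \ref{prop:rangexyz}).

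The main obstacle will be the residual case $s>r_{n,n-1}$, in which $J$ is strictly shorter than $y_n$ and the jobs on $P_2$ that come before $J$ span the real deadline $r_{n,n-1}$ without ending there. I plan to rule it out by a parity-and-magnitude argument: the length $z_n-x_n+1$ to be covered on $P_2$ is odd, so an odd number of odd-length jobs must be used, and the only odd-length jobs available are the blue jobs (of length $1$), the red-fill jobs (all of length at most $x_n-1$), and the black overlap itself; the bounds in Property~\ref{prop:rangexyz} then imply that any decomposition of $z_n-x_n+1$ avoiding the black overlap must, after stripping off the small odd contributions, produce an even subtotal of size exactly $y_n$, which can only be realized by the job $y_n$ and thus loops back to the contradiction of the $s=r_{n,n-1}$ case. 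Hence $J$ has length $z_n-x_n+1$, $J$ is the black overlap job, and it is assigned to the interval $[f_{n,n-1},ds_{1,n}]$.
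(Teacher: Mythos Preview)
Your handling of the cases $s=f_{n,n-1}$ and $s=r_{n,n-1}$ is fine, but the ``main obstacle'' case $s>r_{n,n-1}$ is a genuine gap, and the parity-and-magnitude sketch does not close it. Two things go wrong. First, after removing the odd-length jobs the even subtotal is not forced to be exactly $y_n$: using, say, three blue jobs leaves an even remainder $y_n-2$, five leaves $y_n-4$, and so on. Second, even an even subtotal of $y_n$ need \emph{not} be realized by the single job $y_n$; by Property~\ref{prop:rangexyz} one has $3x_1\le 6b_m+12\le y_n\le 12b_m+12\le 3x_n$, so for suitable instances three green $x$-jobs can sum to $y_n$. Nothing in your argument excludes such decompositions, and without the deadline structure there is no reason the intermediate endpoints on $P_2$ should be forbidden.

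The missing idea---and the paper's approach---is a counting observation, not a parity one: in this particular \SCD instance the number of jobs equals the number of deadlines, so every job ends at a deadline and each single deadline is hit by exactly one processor. You already showed that the job on $P_1$ preceding $y_n$ ends at $r_{n,n-1}$; hence no job on $P_2$ ends there. Since $r_{n,n-1}$ is the only deadline strictly between $f_{n,n-1}$ and $ds_{1,n}$, the job $J$ on $P_2$ ending at $ds_{1,n}$ must start at $f_{n,n-1}$ or earlier, giving $|J|\ge z_n-x_n+1$. The black overlap job is the unique job of length at least $z_n-x_n+1$ (every other job has length at most $y_n=z_n-x_n$), so $J$ is the black overlap and starts exactly at $f_{n,n-1}$. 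This simultaneously disposes of both your cases $s=r_{n,n-1}$ and $s>r_{n,n-1}$ in one line.
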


\begin{proof}
As $r_{n,n-1}$ is the only deadline between $f_{n,n-1}$ and $ds_{1,n}$, the processor that does not use this deadline needs to process a job finishing at $ds_{1,n}$ and starting before $r_{n,n-1}$. This is the black overlap job, since no other job is long enough. It is assigned to the interval  $[f_{n,n-1},ds_{1,n}]$ of length $ds_{1,n}-f_{n,n-1}=z_n-x_n+1$.
\end{proof}

\noindent
This uses up all black jobs. Now, the only jobs left whose length is between $6 b_m + 12$ and $8b_m+14$ are the green y-jobs $y_1, \hdots, y_{n-1}$.

\begin{claimLNCS}\label{cl:ytask}
 For each $\ell\in \set{1, \hdots, n-1}$, the green $y$-job $y_{\ell}$ is assigned to an interval $[r_{i,j-1}, ds_{1,j}]$ for some $i\in \set{1, \hdots, n-1}$ and $j\in \set{1, \hdots, n-1}$.
\end{claimLNCS}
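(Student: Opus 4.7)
My plan for Claim \ref{cl:ytask} is to first localize $y_\ell$ within a single segment, then use a parity argument to show that its endpoint must be a sum deadline, and finally pin down the permitted starting point.

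Since each sum deadline $ds_{1,j}=ds_{2,j}$ has multiplicity two, both processors must end a job at this time, so no job crosses a sum deadline, and therefore each $y_\ell$ lies in a single segment $I_j=[ds_{1,j-1},ds_{1,j}]$. A counting argument (the total number of jobs equals the total number of deadlines counted with multiplicity) also shows that every job ends at a deadline, so the start of $y_\ell$ is either $ds_{1,j-1}$ or the deadline where the previous job on the same processor finishes.

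The parity argument then rules out interior endpoints. By Property \ref{prop:even}, each $x_i$ and each $z_k$ is even, so the real deadlines $r_{i,j-1}$ and sum deadlines $ds_{1,j}$ are even, while the fake deadlines $f_{i,j-1}=r_{i,j-1}-1$ are odd. Since $y_\ell$ is even, its start and end share parity. If $y_\ell$ ended at a real deadline $r_{i,j-1}\ne ds_{1,j}$, its start would be either $ds_{1,j-1}$ or $r_{k,j-1}$ with $k<i$, giving length at most $x_n$, which is strictly smaller than $y_\ell$ by Property \ref{prop:rangexyz}. A symmetric argument on the odd deadlines handles fake endpoints, so $y_\ell$ must end at a sum deadline $ds_{1,j}$.

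Finally, I identify the permitted values of $i$ and $j$. Claims \ref{cl:lastgreeny} and \ref{cl:blackoverlap} already occupy both jobs ending at $ds_{1,n}$ with $y_n$ and the black overlap job, excluding $j=n$. For $j<n$, Claim \ref{cl:blackfill} places a black fill job on one processor in $[r_{n,j-1},ds_{1,j}]$, so $y_\ell$ ends at $ds_{1,j}$ on the other processor, starting at some deadline in $I_j$. The start cannot be $ds_{1,j-1}$ (the length would be $z_j>y_\ell$) nor a fake deadline (the resulting length would be odd, contradicting Property \ref{prop:even}), so it is $r_{i,j-1}$ for some $i$, giving $y_\ell=z_j-x_i$. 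The main obstacle is ruling out $i=n$: unfolding the definitions of $y_\ell$, $z_j$, and $x_n$ shows that $y_\ell=z_j-x_n$ reduces to a relation between $s_j$, $a_m$, $b_\ell$, and $b_m$ that contradicts the standing assumption $s_j\le s_m\le a_m+b_m$. This forces $i\in\{1,\ldots,n-1\}$, completing the claim.
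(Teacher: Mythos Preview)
Your localisation and parity arguments correctly establish that $y_\ell$ lies in a single segment, ends at some sum deadline $ds_{1,j}$, and starts at a real deadline $r_{i,j-1}$; your exclusion of $j=n$ via Claims~\ref{cl:lastgreeny} and~\ref{cl:blackoverlap} is also sound. The paper gets there more directly: since $y_\ell>x_n$ by Property~\ref{prop:rangexyz}, any job of length $y_\ell$ starting inside segment $I_p$ ends strictly after $r_{n,p-1}$, and the only deadline left in the segment is $ds_{1,p}$; parity then forces the start to be a real deadline. Your case analysis on the endpoint is correct but more elaborate than necessary.

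The genuine gap is your final step, ruling out $i=n$. Unfolding $y_\ell=z_j-x_n$ for $\ell,j\in\{1,\dots,n-1\}=\{1,\dots,m\}$ gives
\[
b_\ell+3(b_m+2)=s_j+4(b_m+2)-a_m-1-(b_m+2),
\]
i.e.\ $s_j=a_m+b_\ell+1$. For $\ell<m$ we have $b_\ell\le b_m-1$ (the $b_i$ are strictly increasing integers), hence $s_j\le a_m+b_m$, which is perfectly consistent with the standing assumption $s_m\le a_m+b_m$; no contradiction arises. Your argument only goes through when $\ell=m$. Incidentally, the paper's own proof does not attempt to establish $i\ne n$ either: the downstream uses of this claim need only that each $y_\ell$ ends at a sum deadline and starts at a real deadline, so the restriction $i\le n-1$ in the claim's statement is stronger than what is actually proved or required there.
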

\begin{proof}
Each job is assigned to an interval inside some segment, as the double deadlines prevent jobs to span more than one segment.
Suppose the green $y$-job $y_{\ell}$ is assigned to segment $p$. As $ds_{1,p} + y_{\ell} > ds_{1,p} + x_n$, by Properties \ref{prop:increasing} and \ref{prop:ybiggerx}, and the deadline following $r_{n,p} = ds_{1,p} + x_n$ is $ds_{1,p+1}$, it must be that the green $y$-job $y_{\ell}$ finishes at $ds_{1,p+1}$. Moreover, $ds_{1,p+1}-y_\ell$ is equal to a real deadline as $ds_{1,p+1}-y_\ell$ is even.
\end{proof}

\noindent
Each of the $2n$ jobs that have been assigned so far finish at a double deadline $ds_{1,j}, ds_{2,j}$. Thus, no other jobs may end at a double deadline.

\begin{claimLNCS}\label{cl:redfill}
 A red fill job of length $x_{1}-1$ is assigned to each interval $[ds_{1,j}, f_{1,j}]$ with $0 \le j \le n-1$.
\end{claimLNCS}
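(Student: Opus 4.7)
My plan is to show that on at least one processor, the interval $[ds_{1,j}, f_{1,j}]$ is entirely occupied by a single red fill job of length $x_1 - 1$. First, I would verify by direct computation that the interval has length
\[
f_{1,j} - ds_{1,j} = \left(x_1 + \sum_{k=1}^{j} z_k\right) - 1 - \sum_{k=1}^{j} z_k = x_1 - 1.
\]
Since $ds_{1,j}$ is a double deadline, both processors finish a job at $ds_{1,j}$ and start new jobs. Since $f_{1,j}$ is the next deadline after $ds_{1,j}$, at least one processor $P$ must finish a job at $f_{1,j}$, so $P$'s jobs within $[ds_{1,j}, f_{1,j}]$ sum to $x_1 - 1$.

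Next, I would narrow down the candidate jobs. By Property~\ref{prop:rangexyz}, green y, black fill, and black overlap jobs all have length strictly greater than $x_1 - 1$, and green x jobs have length at least $x_1$; hence none of these fit entirely in this interval. Moreover, by Claims~\ref{cl:blackfill}--\ref{cl:ytask}, the green y, black fill, and black overlap jobs have already been placed at intervals ending at double deadlines. The remaining candidates are blue jobs (length $1$), red fill jobs (lengths $x_i - 1 - x_{i-1}$), and red overlap jobs for $i \ge 2$ (lengths $x_i - x_{i-1} < x_1 - 1$). Among all these, only the red fill job for $i = 1$ has length exactly $x_1 - 1$.

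The critical step is to show that $P$ places a single job in $[ds_{1,j}, f_{1,j}]$. For this I would use a global accounting across all $n$ segments: Claims~\ref{cl:blackfill}--\ref{cl:ytask} fix the final jobs of each segment on each processor, so each preamble has a known total length, and the collective length of blue, red fill, and red overlap jobs matches the collective preamble length exactly. Since there are $n$ intervals of the form $[ds_{1,j}, f_{1,j}]$ together with a matching supply of red fill jobs of length $x_1 - 1$, any decomposition of such an interval into multiple smaller jobs would leave an $x_1 - 1$ red fill orphaned with no admissible placement.

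The main obstacle I foresee is carrying out this global accounting rigorously, especially to exclude placing a red fill of length $x_1 - 1$ inside a longer preamble that spans several real deadlines. I expect the parity structure (fake deadlines at odd times, real and double deadlines at even times; blue, red fill, and black overlap jobs of odd length, all others of even length) to be the crucial tool for ruling out such alternatives.
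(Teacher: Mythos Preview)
Your argument takes a much longer route than necessary, and the ``critical step'' you flag as the main obstacle is in fact a non-issue once you use one simple observation you have overlooked: the constructed \SCD instance has exactly as many jobs as deadlines. Since every deadline must be met by some job finishing there, pigeonhole forces \emph{every} job to end at a deadline. Consequently, because $f_{1,j}$ is (as you correctly note) the very first deadline after the double deadline $ds_{1,j}$, the processor that meets $f_{1,j}$ cannot have any intermediate job boundary in $[ds_{1,j},f_{1,j}]$; it runs exactly one job there, of length $x_1-1$. Your entire global-accounting plan, the parity considerations, and the worry about ``orphaned'' red fill jobs are unnecessary.

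For the identification of the job as a red fill job, the paper does not enumerate job types as you propose. Instead, it invokes the convention stated just before Claim~\ref{cl:blackfill}: jobs of equal length are interchangeable, so \mywlog the job of length $x_1-1$ is a red fill job. Your enumeration would also work (and indeed the red fill job with $i=1$ is the unique job type of length $x_1-1$), but the paper's one-line ``\mywlog'' is the intended shortcut. In short, the paper's proof is two sentences; yours would be correct in spirit but circuitous, and as written it leaves the hardest-looking step (single job in the interval) unjustified when in fact it is immediate from the job/deadline count.
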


\begin{proof}
Since both processors finish a job at deadline $ds_{1,j}$ (\longversion{respectively}\shortversion{resp.}, are initialized at time $ds_{1,0}=0$) and one of them finishes a job at the following deadline, which is $f_{1,j}$, we need to assign a job of length $f_{1,j}-ds_{1,j}=x_1-1$ to the interval  $[ds_{1,j}, f_{1,j}]$. \myWlog, this is one of the red fill jobs of length $x_1-1$.
\end{proof}

\noindent
This uses up all red fill jobs of length $x_1-1$.

\begin{claimLNCS}\label{cl:xtask}
For each $\ell\in \set{1, \hdots, n}$, the green $x$-job $x_{\ell}$ is assigned to an interval $[ds_{1,j}, r_{i,j}]$ for some $i\in \set{1, \hdots, n}$ and $j\in \set{0, \hdots, n-1}$.
\end{claimLNCS}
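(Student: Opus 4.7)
The plan is to pin down both endpoints of every green $x$-job $x_\ell$ by combining a parity argument with the length bounds of Property~\ref{prop:rangexyz}.

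First I would determine the right endpoint. By Property~\ref{prop:even} the length $x_\ell$ is even, and directly from the definitions the real deadlines $r_{i,j}$ and double deadlines $ds_{1,j}=ds_{2,j}$ are even while every fake deadline $f_{i,j}=r_{i,j}-1$ is odd. Since the total number of jobs equals the total number of deadlines (counting each double deadline twice), every job-ending must coincide with a deadline, so the endpoint of $x_\ell$ is real or double. Claims~\ref{cl:blackfill}--\ref{cl:ytask} already place two job-endings at each of the $n$ double deadlines (a black fill job and a green $y$-job at $ds_{1,j}$ for $j<n$, and $y_n$ together with the black overlap job at $ds_{1,n}$), so the double deadlines are saturated and $x_\ell$ must end at some real deadline $r_{i,j}$.

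Next I would locate the left endpoint. Since both processors finish at every double deadline, no job can cross one, so the start of $x_\ell$ is a deadline in $[ds_{1,j},r_{i,j})$. Its only candidates aside from $ds_{1,j}$ are $f_{k,j}$ and $r_{k,j}$ with $k<i$; by Property~\ref{prop:rangexyz} each such candidate is at least $x_1-1+ds_{1,j}\ge 2b_m+3+ds_{1,j}$, while the same Property yields $r_{i,j}-x_\ell=x_i-x_\ell+ds_{1,j}\le 2b_m+ds_{1,j}$. These bounds are incompatible unless the start equals $ds_{1,j}$, which in turn forces $x_i=x_\ell$ and $i=\ell$, yielding the claim.

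The main obstacle is the double-deadline bookkeeping: one must be sure that Claims~\ref{cl:blackfill}--\ref{cl:ytask} use up precisely the two ``slots'' at every double deadline, because only then does the parity split leave the real deadlines as the sole landing spot for the endpoint of $x_\ell$. Once this is secured, the additive blow-ups of Property~\ref{prop:rangexyz} provide just enough slack to rule out every shifted placement in one line.
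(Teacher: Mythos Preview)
Your bookkeeping for the double deadlines is fine, and your length estimate in the second step is exactly the right computation. The problem is the parity argument in the first step: from ``$x_\ell$ is even'' and ``fake deadlines are odd'' you conclude that $x_\ell$ cannot end at a fake deadline, but the parity of the endpoint equals the parity of the \emph{starting} point, and you have not yet shown that the start is even. A priori $x_\ell$ could start at a fake deadline $f_{k,j}$ (odd) and end at another fake deadline $f_{i',j}$ (odd); your parity step does not exclude this. Since your second step then assumes the endpoint is some $r_{i,j}$, the argument as written is circular.

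The repair is to reverse the order, which is what the paper does. Once you know the double deadlines are saturated (your bookkeeping paragraph), the endpoint of $x_\ell$ lies among the inner deadlines of its segment, hence is at most $r_{n,j}$. Then your own inequality gives
\[
\text{start} \;=\; \text{end} - x_\ell \;\le\; r_{n,j}-x_\ell \;=\; ds_{1,j} + (x_n - x_\ell) \;\le\; ds_{1,j} + 2b_m \;<\; ds_{1,j} + x_1 - 1 \;=\; f_{1,j},
\]
so the start is forced to be $ds_{1,j}$, with no assumption on the endpoint beyond ``$\le r_{n,j}$''. Only now, with the start pinned to the even value $ds_{1,j}$, does parity legitimately place the endpoint at a real deadline $r_{i,j}$ (and then $x_i = x_\ell$ as you note).
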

\begin{proof}
Suppose the green $x$-job $x_{\ell}$ is assigned to segment $p$. Notice that $x_{\ell} > r_{n,p}-f_{1,p}$. Indeed $r_{n,p}-f_{1,p}=x_n-x_1+1$ and, by construction, $x_n-x_1+1\le 2b_m$, whereas $x_{\ell}\ge 2b_m+4$. Moreover, $r_{n,p}$ is the latest deadline in $p$. So the green $x$-job $x_{\ell}$ starts at $ds_{1,p}$. Notice that $ds_{1,p}+x_\ell < ds_{1,p+1}$ and that $ds_{1,p}+x_\ell$ corresponds to a real deadline as $ds_{1,p}+x_\ell$ is even, but all fake deadlines are odd.
\end{proof}

\noindent
By Claims \ref{cl:lastgreeny}, \ref{cl:ytask}, and \ref{cl:xtask}, and since we have the same amount of segments as green $x$-jobs, \longversion{respectively}\shortversion{resp.} green $y$-jobs, we obtain that
each segment $I_j$, $1 \le j \le n$,, contains exactly one green $x$-job and exactly one green $y$-job.

\begin{claimLNCS}
For $j\in\set{1, \hdots, n}$,
the green $x$-job and the green $y$-job in the segment $I_j$ do not overlap.
\end{claimLNCS}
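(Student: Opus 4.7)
The plan is to pin down the two jobs precisely via the previous claims, dispose of the easy case where they share a processor, and then run a parity-against-fake-deadline argument in the remaining case.

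First, I would use Claim~\ref{cl:xtask} to position the green $x$-job $x_k$ of segment $I_j$ in the interval $[ds_{1,j-1},\,ds_{1,j-1}+x_k]$, and Claims~\ref{cl:ytask} and~\ref{cl:lastgreeny} to position the green $y$-job $y_\ell$ of segment $I_j$ in the interval $[ds_{1,j}-y_\ell,\,ds_{1,j}]$. These intervals live on (possibly the same, possibly different) processors, and they overlap in time if and only if $x_k+y_\ell>z_j$. So the claim reduces to showing $x_k+y_\ell\le z_j$. If $x_k$ and $y_\ell$ are scheduled on the same processor, a processor executes at most one job at a time and the inequality is automatic.

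The interesting case is when $x_k$ and $y_\ell$ sit on different processors. Here I would focus on the fake deadline $f_{k,j-1}=ds_{1,j-1}+x_k-1$, which lies strictly inside the interval during which $x_k$ is being executed. The $x_k$-processor is therefore busy with $x_k$ at time $f_{k,j-1}$ and cannot finish a job there, so the other processor—the one running $y_\ell$—must. But $y_\ell$ ends at $ds_{1,j}\neq f_{k,j-1}$, so the job meeting the deadline $f_{k,j-1}$ on that processor is different from $y_\ell$; hence $y_\ell$ has not yet started by time $f_{k,j-1}$. Rewriting $ds_{1,j}-y_\ell\ge f_{k,j-1}$ yields $x_k+y_\ell\le z_j+1$.

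To close the remaining gap of one unit I would invoke Property~\ref{prop:even}: both $x_k,y_\ell$ are even, as is $z_j$, so the even integer $x_k+y_\ell$ cannot equal the odd integer $z_j+1$, and the inequality sharpens to $x_k+y_\ell\le z_j$, ruling out overlap. The main obstacle—and the whole point of the reduction's factor-of-$2$ blow-up—is this different-processor subcase: without the odd-offset fake deadline $f_{k,j-1}$ combined with parity one would only get the weaker bound $x_k+y_\ell\le z_j+1$, which would still permit one unit of overlap and be insufficient for the subsequent recovery of a \dNMTS solution.
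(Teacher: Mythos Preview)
Your argument is correct and rests on the same two ingredients as the paper's proof: the fake deadline $f_{k,j-1}=ds_{1,j-1}+x_k-1$ that sits one unit before the end of the green $x$-job, together with the parity of the green job lengths and the segment length. The paper simply runs the contrapositive---assume overlap, observe that then both processors are mid-job at time $f_{k,j-1}$ (the $x$-processor because $x_k$ ends one unit later, the $y$-processor because $y_\ell$ starts at an even real deadline strictly before $f_{k,j-1}$ and ends at $ds_{1,j}$), so nobody meets $f_{k,j-1}$---whereas you argue directly that the $y$-processor must meet $f_{k,j-1}$ and deduce $x_k+y_\ell\le z_j+1$, then sharpen by parity; these are the same argument in contrapositive form.
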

\begin{proof}
Suppose otherwise, that is, suppose there is a $j\in\set{1, \hdots, n}$ such that $I_j$ contains a green $x$-job, say $x_\ell$, and a green $y$-job, say $y_k$, that overlap (i.e. the intervals they are assigned to overlap). Since $x_\ell$ ends at a real deadline by Claim \ref{cl:xtask} and $y_k$ starts at a real deadline by Claim \ref{cl:ytask}, no job ends at the fake deadline situated at $ds_{1,j-1}+x_{\ell}-1$, which contradicts the validity of the \SCD solution.
\end{proof}

\noindent
The last claim implies that in each segment $I_j$, $1\le j\le n$, there is a green $x$-job $x_{\ell_j}$ and a green $y$-job $y_{k_j}$ which together have the same size as the interval. Hence the couples $C_{j} = \set{a_{\ell_j},b_{k_j}}, 1\le j \le n$, form \longversion{the desired} \shortversion{a} solution of \dNMTS.
Thus, we have the following lemma.

\begin{lemma}\label{lem:redNmtsScd}
 \dNMTS $\le_p$ \SCD.
\end{lemma}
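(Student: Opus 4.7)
The plan is to realize the reduction in two stages. First, I would rescale the dNMTS instance $(A,B,S)$ to an equivalent dNMTS instance $(X,Y,Z)$ using additive blow-ups by multiples of $b_m + 2$ chosen so that $X$, $Y$, and $Z$ occupy three disjoint arithmetic "bands" (as recorded by Properties~\ref{prop:even}--\ref{prop:rangexyz}), so that no two $X$-elements can sum to an element of $Z$ and no two $Y$-elements can either (Property~\ref{prop:oneXoneY}), and so that only $x_n$ and $y_n$ sum to $z_n$ (Property~\ref{prop:sumn}). Equivalence of $(A,B,S)$ and $(X,Y,Z)$ follows because the blow-up is affine and respects the summation constraint $\sum s_i = \sum A + \sum B$. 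Then I would encode $(X,Y,Z)$ as an SCD instance by laying out $n$ "segments" delimited by sum deadlines $ds_{1,j} = ds_{2,j} = \sum_{k\le j} z_k$ and, inside each segment, placing real deadlines $r_{i,j}$ and fake deadlines $f_{i,j} = r_{i,j} - 1$ corresponding to each $x_i$. The jobs come in six categories (green $x/y$, blue, red fill/overlap, black fill/overlap) whose lengths are tuned to the gaps between these deadlines.

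For the forward direction, given a dNMTS solution expressed as pairs $C_j = \{x_{\pi_1(j)}, y_{\pi_2(j)}\}$ with sum $z_j$, I would schedule processor $P_1$ to run the two green jobs of segment $I_j$ back-to-back, so that the break falls exactly at $r_{\pi_1(j),j-1}$; meanwhile $P_2$ meets every remaining real and fake deadline by alternating red fill jobs of length $x_{i+1}-1-x_i$ with blue jobs of length $1$, using one red overlap job to "jump over" the deadline $r_{\pi_1(j),j-1}$ that $P_1$ absorbs, and concluding with a black fill job to reach $ds_{1,j}$. The last segment is handled symmetrically with the black overlap job (Figure~\ref{fig:scd}). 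Arithmetic verification of these assignments is mechanical once one checks $\sum (X \cup Y) = 2 \sum z_i$.

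For the converse, I would extract a dNMTS solution from any valid SCD schedule by a cascade of structural claims that progressively pin down each category of job. The order I would follow is: (i) black fill jobs are forced into the intervals $[r_{n,j},ds_{1,j+1}]$ because only they have the right length to bridge this gap (Claim~\ref{cl:blackfill}); (ii) Property~\ref{prop:sumn} then forces $y_n$ into $[r_{n,n-1}, ds_{1,n}]$ (Claim~\ref{cl:lastgreeny}) and the black overlap job into $[f_{n,n-1}, ds_{1,n}]$ (Claim~\ref{cl:blackoverlap}); (iii) the remaining green $y$-jobs lie in some $[r_{i,j-1},ds_{1,j}]$ because Property~\ref{prop:ybiggerx} makes them too long to fit elsewhere and parity forces their left endpoint to a real deadline (Claim~\ref{cl:ytask}); (iv) red fill jobs of length $x_1-1$ are the only candidates for $[ds_{1,j},f_{1,j}]$ (Claim~\ref{cl:redfill}); (v) each green $x$-job must start at a segment boundary for the same length/parity reasons (Claim~\ref{cl:xtask}). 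A counting argument then gives exactly one green $x$-job and one green $y$-job per segment, and since the $x$-job ends at a real deadline while the $y$-job starts at one, a fake deadline at $ds_{1,j-1}+x_\ell - 1$ would be missed if the two overlapped, so their lengths sum exactly to $|I_j| = z_j$. Reading off the underlying pair $\{a_{\ell_j}, b_{k_j}\}$ gives the dNMTS solution.

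The main obstacle is the converse: proving that the scaling really does preclude every alternative packing of jobs into deadlines. This amounts to verifying that the three weight bands of Property~\ref{prop:rangexyz} are disjoint and respect the ordering $x_n < y_1$, $y_n < z_1$, $z_n < 2 y_1$; that the even/odd parity split between job lengths and fake deadlines cannot be circumvented by combining several jobs to finish at a fake deadline; and that the sum-counting argument at the outer boundary leaves no slack for rearrangement. Once these arithmetic checks are in hand, the claims fall into place in the order above and the reduction is complete.
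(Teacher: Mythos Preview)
Your proposal is correct and follows essentially the same approach as the paper: the same affine blow-up to $(X,Y,Z)$, the same deadline and job categories, the same forward schedule, and the same cascade of claims (in the same order) for the converse. One minor slip: in the forward direction you write $\sum (X\cup Y) = 2\sum z_i$, but the green jobs should sum to $\sum z_i$ (they fill one processor's worth of each segment), while it is the total over \emph{all} job categories that equals $2\sum z_i$.
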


\noindent
We have assembled enough information to prove our main theorem.

\begin{theorem}
 \WSRK{2} is strongly \classNP-complete.
\end{theorem}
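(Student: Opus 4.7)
The plan is to assemble the theorem from the two reduction lemmas already established in this section, together with the known strong \classNP-hardness of \dNMTS due to Hulett et al., and to verify that the reductions are \emph{pseudo-polynomial} so that strong hardness is preserved.

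First I would argue membership in \classNP. Given a candidate path $P$ on $V$ one can traverse $P$, compute the weighted prefix sums in $O(n)$ time, and read off the multiset $\mathcal{S}(P)$ to compare it against the input multiset $\mathcal{S}$, so \WSRK{2} is in \classNP.

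Next I would chain the two reductions. By Lemma~\ref{lem:redNmtsScd} we have \dNMTS $\le_p$ \SCD, and by Lemma~\ref{lem:redScdWsrk} we have \SCD $\le_p$ \WSRK{2}. Composing these gives a polynomial-time reduction from \dNMTS to \WSRK{2}, which together with \classNP-membership establishes \classNP-completeness.

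The main obstacle, and the key point that must be checked carefully, is that the composed reduction witnesses \emph{strong} \classNP-completeness. For this I would verify that both reductions are pseudo-polynomial, i.e.\ that the largest numerical value in the constructed instance is polynomially bounded in the largest number of the source instance and the input size. For the reduction from \dNMTS to \SCD, inspection of the blow-ups $x_i, y_i, z_i$ (all of the form $2(\,\cdot\,+O(b_m))$) and of the constructed deadlines and job lengths shows that every number is bounded by a polynomial in $b_m+n$; Properties~\ref{prop:even}--\ref{prop:rangexyz} capture exactly this bound. For the reduction from \SCD to \WSRK{2}, vertex weights and splits are set equal to job lengths and deadlines respectively, so no blow-up occurs at all. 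Hence if \WSRK{2} admitted a pseudo-polynomial algorithm then so would \dNMTS, contradicting the result of Hulett et al.\ \cite{HulettWW08}. This proves strong \classNP-completeness.
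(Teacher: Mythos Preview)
Your proposal is correct and follows essentially the same route as the paper: invoke Lemmas~\ref{lem:redScdWsrk} and~\ref{lem:redNmtsScd}, the strong \classNP-hardness of \dNMTS from~\cite{HulettWW08}, and an easy \classNP-membership argument. Your explicit check that the numbers produced by both reductions stay polynomially bounded (so that \emph{strong} hardness is preserved) is a welcome elaboration that the paper leaves implicit; note only that the sum deadlines $ds_{1,j}=\sum_{k\le j} z_k$ can be as large as $O(n\cdot b_m)$ rather than $O(b_m)$, but this is of course still polynomial in $n$ and $b_m$.
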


\begin{proof}
The theorem follows from the strong \classNP-hardness of \dNMTS, Lemmas \ref{lem:redScdWsrk} and \ref{lem:redNmtsScd}, and the membership of \WSRK{2} in \classNP, which is easily verified as the certificate is a path and an assignment of the splits to its edges, all of which can be encoded in polynomial space.
\end{proof}


\begin{restatable}{corollary}{RCorCaterpillars}\label{CorCaterpillars}
\catFull\ is \classNP-complete.
\end{restatable}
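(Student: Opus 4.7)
The plan is to reduce from \WSRK{2}, shown strongly \classNP-complete in the theorem above, to the caterpillar variant; membership in \classNP\ is clear, since a caterpillar together with an assignment of $\mathcal{S}$ to its edges is a polynomial-size certificate.

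Given a \WSRK{2} instance $(V, \omega, \mathcal{S})$ with $V = \{v_1, \ldots, v_n\}$, weights $w_i := \omega(v_i)$, and $N := \sum_{i=1}^n w_i$, I would construct an \SR instance on $N$ unit-weight vertices with split multiset
\[
 \mathcal{S}' \;:=\; \mathcal{S} \,\cup\, \bigcup_{i=1}^n \{1, 2, \ldots, w_i - 1\}.
\]
The intended encoding replaces each weighted vertex $v_i$ by a unit-weight backbone vertex $b_i$ carrying a pendant path (a \emph{hair}) of $w_i - 1$ unit-weight vertices.

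For the forward direction, from a \WSRK{2} solution $(v_{\pi(1)}, \ldots, v_{\pi(n)})$ I would build the caterpillar with backbone $(b_1, \ldots, b_n)$ and a hair of length $w_{\pi(i)} - 1$ at $b_i$; this caterpillar has maximum degree at most $3$ (internal backbone vertices carry two backbone neighbours plus at most one hair, endpoints carry one backbone neighbour plus at most one hair). The backbone edge $b_j b_{j+1}$ has split $\min\bigl(\sum_{i \le j} w_{\pi(i)},\, \sum_{i > j} w_{\pi(i)}\bigr)$, matching the split of the corresponding edge of the weighted path. A hair edge at distance $k$ from the backbone contributes a split of $\min(k, N - k) = k$, provided $\max_i (w_i - 1) \le N/2$, a condition I would secure by a harmless preliminary padding of the input.

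The main obstacle is the backward direction: from a caterpillar $T$ realizing $\mathcal{S}'$, extract a valid \WSRK{2} solution. This requires pinning down that $T$ has a backbone of length exactly $n$, with hair-length multiset $\{w_1 - 1, \ldots, w_n - 1\}$, and backbone-split multiset $\mathcal{S}$. I would further strengthen the reduction by uniformly scaling all weights and splits in $\mathcal{S}$ (and adjoining a controlled amount of padding forced to the endpoints by pairs of large matched splits) so that every split in $\mathcal{S}$ strictly exceeds $\max_i (w_i - 1)$. Since each hair edge has split bounded by the length of its hair, the $n-1$ edges of $T$ carrying splits from $\mathcal{S}$ must then lie on the backbone. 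A counting argument using the max-degree-$3$ constraint (which forbids two hairs meeting at an internal backbone vertex and limits endpoint backbone vertices to a single additional hair) forces the backbone to contain exactly $n$ vertices, the remaining edges to form $n$ hairs whose split-contributions are consecutive runs $\{1,\dots,\ell\}$, and, by matching multiplicities in $\mathcal{S}' \setminus \mathcal{S}$, the multiset of hair lengths to equal $\{w_1 - 1, \ldots, w_n - 1\}$. Reading $T$ along its backbone then yields the permutation $\pi$ and hence a \WSRK{2} solution, completing the reduction.
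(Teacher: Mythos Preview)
Your approach is essentially the paper's: reduce from strongly \classNP-hard \WSRK{2}, replace each weight-$w_i$ vertex by a backbone vertex carrying a hair of $w_i-1$ unit vertices, and preprocess by scaling and padding with two extra large weights so that the ``real'' splits separate cleanly from the hair splits. The forward direction and the multiset identity (hair-split multisets $\bigcup_i\{1,\dots,\ell_i\}$ determine the multiset $\{\ell_i\}$) are fine.

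The one place where your sketch is too quick is the claim that ``a counting argument using the max-degree-$3$ constraint forces the backbone to contain exactly $n$ vertices.'' Degree~$3$ alone does not give this. From the number of $1$-splits you get exactly $n$ leaves and hence $n-2$ degree-$3$ vertices, but this says nothing about degree-$2$ backbone vertices without hairs: a priori the backbone could be longer than $n$, with some backbone edges near its ends carrying auxiliary splits, and correspondingly fewer or shorter hairs. What rules this out is not a counting argument but the specific padding. In the paper's version, two extra vertices of weight $\Omega$ are added (with $\Omega$ larger than every original weight) and all quantities are scaled by $n\Omega$; then the two corresponding splits of value $n\Omega^2$ exceed every auxiliary split, so they too must be backbone edges, and the subtrees beyond the endpoints of the ``big-split'' path are forced to be single hairs of maximal length $n\Omega^2-1$. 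Together with the observation that the big splits form a connected subpath whose interior vertices (except the unique pivot) must each carry a hair, this pins down the backbone exactly. Your ``pairs of large matched splits'' is the right instinct, but you should make the constants explicit and replace the counting argument by this structural one.
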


\longversion{

\begin{proof}
It is clear that this problem, abbreviated \catShort, is in \classNP. To show that it is hard for \classNP, we reduce from \WSRK{2}.
Let $\iP'=(\omega'_1, \hdots, \omega'_{n-2}, s'_1, \hdots, s'_{n-3})$ be an instance of \WSRK{2}, where
$\omega'_i$, $1\leq i \leq n-2$, are the vertex weights and $s'_j$, $1\leq j \leq n-3$, are the splits.
We assume that all vertex weights and splits are upper bounded by a polynomial in $n$;
as \WSRK{2} is strongly \classNP-hard, it is still \classNP-hard with this restriction.
Define $\Omega:= 1+2\cdot \max \set{\omega'_i: 1\leq i \leq n-2}$. To simplify the argument, consider an auxiliary instance $\iP=(\omega_1, \hdots, \omega_n, s_1, \hdots, s_{n-1})$ of \WSRK{2} obtained from $\iP'$ by:
\begin{itemize}
\item augmenting the values of $s'_j$, $1\leq j \leq n-3$, by $\Omega$,
\item adding $\omega_{n-1}=\omega_{n}=\Omega$ to the multiset of weights,
\item adding $s_{n-2}=s_{n-1}=\Omega$ to the multiset of splits,
\item and finally, multiplying each value in $\iP$ by $n\Omega$ (so, for $1\leq i \leq n-3$, $\omega_i = \omega'_in\Omega$, and $s_i=(s'_i+\Omega)\Omega$).
\end{itemize}
It is not difficult to see that $\iP$ and $\iP'$ are equivalent.

Now let us create an instance $\iC$ of \catShort\ in the following way.
\begin{itemize}
\item replace each weight $\omega_i$, $1\leq i \leq n$, by $\omega_i$ copies of weight $1$,
\item for each $\omega_i$, $1\leq i \leq n$, add {\em auxiliary splits} $s_{f,i}=f$, $1\leq f \leq \omega_i - 1$,
\item keep the {\em original splits} $(s_1,\hdots,s_{n-1})$.
\end{itemize}
Notice that there are $\sum_{i=1}^{n}\omega_i$ vertices and $(\sum_{i=1}^{n}\omega_i) -1$ splits (i.e. edges) in total.

As one easily checks, if $\iP$ has a solution then $\iC$ has a solution.  Now suppose $\iC$ has a solution $C$.
Then, as $C$ is an instance for \catShort, it follows that $C$ is a caterpillar of maximum degree $3$ (with unbounded hair-length). Call $B$ the backbone of $C$. Let $B'\subseteq B$ be maximal such that its endvertices have degree $3$.

By construction $s_i>1$, $1\leq i \leq n-1$, and only the splits $s_{1,i}$, $1\leq i \leq n$, have value $1$. 
There are exactly $n$ such splits, and so, $C$ must have exactly $n$ leaves. 
 
Since there is no split of value $n \Omega^2 + 1$, each hair of $C$ has length at most $n \Omega^2$.
So, as $s_i>  n \Omega^2$ for $i=1,\ldots n-3$, we obtain that  the splits $s_1, ..., s_{n-3}$
are assigned to edges $b_1,\ldots, b_{n-3}$ in $E(B')$. Observe that the edges $b_1,\ldots, b_{n-3}$
induce a connected graph (i.e. a path), as all other splits are smaller than the minimum of the $s_i$, $i=1,\ldots n-3$. 

Let $P$ be the path formed by the edges $b_1,\ldots, b_{n-3}$ and let $u$ and $v$ be the two
endpoints of $P$. There is at most one vertex $y$ such that
if we look at the values of the splits of the edges from $u$ to $y$ (resp. from $v$ to $y$),
then they are strictly increasing. In addition, if two edges of $P$ share a vertex $x$, $x \neq y$,
then there must be a hair attached to $x$, because the splits associated to these two edges differ by more than $1$.
Furthermore, there are hairs $H_1$ and $H_2$ of length
$n \Omega^2$ attached to the first and to the last vertex on the backbone, as no two auxiliary splits
are large enough to add up to one of the original splits  $s_i$, $i=1,\ldots n-3$.
From the fact that $C$ has exactly $n$ leaves, it follows that the remaining hair has to be attached to $y$.
As a consequence, $E(B')=\{b_1,\ldots, b_{n-3}\}$.


Let $B''$ be equal to $B'$ augmented with the two edges to which the splits of value $n \Omega^2$ are assigned. 
All edges outside $B''$ (that is, edges from hairs) belong to auxiliary splits. This means that the edges adjacent to $B''$ correspond to auxiliary splits $s_{\omega_i-1,i}$.

In order to find a solution for $\iP$, it thus suffices to take $B''$ and replace all hairs with the corresponding weight on their starting vertex on $B''$.
\end{proof}
}

\section{Algorithm for \WSRK{2} with few distinct vertex weights}
\label{sec:algoPaths} 

Let \longversion{$k = |\set{\omega(v)~:~v\in V}|$}\shortversion{$k = |\set{\omega(v):v\in V}|$} denote the number of distinct vertex weights in an instance
$(V,\omega, \mathcal{S})$ for \WSRK{2}.
\longversion{In this section, we}\shortversion{We} exhibit a dynamic programming algorithm for \WSRK{2} that works in polynomial time when $k$ is a constant.
Moreover, standard backtracking can be used to actually
construct a solution, if one exists.

\newcommand{\pivot}{v_{\text{pivot}}}

Suppose $|V|=n$ and the multiset of splits, $\mathcal{S}$, contains the splits
$s_1 \le s_2 \hdots \le s_{n-1}$. Let $w_1 < w_2 \hdots < w_k$ denote the distinct vertex weights
and $m_1, m_2, \hdots, m_k$ denote their respective multiplicities, i.e.
$m_i = |\set{v\in V~:~\omega(v)=w_i}|$ for all $i\in \set{1, 2, \hdots, k}$.

Our dynamic programming algorithm computes the entries of a boolean table $A$.
The table $A$ has an entry
$A[p, \linebreak[1] W_L,\linebreak[1]  W_R, \linebreak[1] v_1, \linebreak[1] v_2, \linebreak[1] \hdots, \linebreak[1] v_{k}]$
for each integer $p$ with $1 \leq p \leq n-1$, 
each two integers $W_L,W_R \in \mathcal{S}$,
and each $v_i \in \set{0,1,\hdots, m_i}$, where $i\in \set{1,2, \hdots, k}$.
The entry
$A[p, \linebreak[1] W_L, \linebreak[1] W_R, \linebreak[1] v_1, \linebreak[1] v_2, \linebreak[1] \hdots, \linebreak[1] v_{k}]$
is set to \texttt{true} iff there is an assignment of the splits $s_1, s_2, \hdots, s_p$ to the $\ell$
leftmost edges and the $r$ rightmost edges of the path $P_n$ on $n$ vertices, such that
\begin{itemize}
\item $p=\ell + r$;
\item $v_1$ weights $w_1$, $v_2$ weights $w_2$, \dots, and $v_k$ weights $w_k$ are assigned
to the $\ell$ leftmost and the $r$ rightmost vertices of $P_n$
such that each split assigned to the left (respectively to the right) part of the path corresponds to the sum of the vertex
weights assigned to vertices to the left (respectively to the right) of this split; and
\item$W_L$
is equal to the value of the $\ell^{\text{th}}$ split from the left and $W_R$ is equal to the
$r^{\text{th}}$ split from the right.
\end{itemize}
Intuitively, our algorithm assigns splits and weights by starting from both endpoints of the path
and trying to join these two sub-solutions.

For the base case, set $A[0,W_L,W_R,v_1, v_2, \hdots, v_k]$ to \texttt{true}
if $W_L=W_R=v_1=v_2=\hdots =v_k=0$ and to \texttt{false} otherwise. We compute the remaining
entries of $A$ by increasing values of $p$ using the following recurrence.
\begin{align*}
A[p,W_L,W_R,v_1, v_2, \hdots, v_k] &= \bigvee_{i=1}^k
\begin{cases}
A[p-1,W_L-w_i,W_R,v_1,v_2, \hdots, v_{i-1},\\\quad\quad v_i-1,v_{i+1},v_{i+2}, \hdots, v_k]\\
\vee A[p-1,W_L,W_R-w_i,v_1,v_2, \hdots, v_{i-1},\\\quad\quad v_i-1,v_{i+1},v_{i+2}, \hdots, v_k]
\end{cases}
\end{align*}
In the previous recurrence, the formulas that refer to table entries that are undefined have the value \texttt{false}.

The final result of the algorithm is computed by evaluating the expression
\begin{equation*}
\bigvee_{\substack{W_L,W_R \in \mathcal{S}\\ i\in \set{1,2,\hdots , k}\\ (W_L \leq w_i+W_R)~\wedge~(W_R \leq w_i +W_L)}} \hspace{-1.4cm}A[|\mathcal{S}|, W_L,W_R,m_1,m_2,\hdots , m_{i-1},m_i-1,m_{i+1},m_{i+2},\hdots, m_k].
\end{equation*}
\shortversion{The correctness proof and the running-time analysis of our dynamic programming algorithm (see Appendix~\ref{ProofAlgoPaths}) establish the following theorem.}

\begin{restatable}{theorem}{RthAlgoPaths}\label{thAlgoPaths}
 \WSRK{2} can be solved in time $O(n^{k+3}\cdot k)$, where $k$ is the number of distinct vertex
 weights of any input instance $(V,\omega, \mathcal{S})$ and \longversion{$n$ is the number of vertices.}\shortversion{$n=|V|$.}
\end{restatable}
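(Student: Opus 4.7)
The plan is to establish Theorem~\ref{thAlgoPaths} in four steps: (i) an auxiliary structural property of paths that justifies the two-sided greedy DP, (ii) an inductive proof of the semantic invariant of the table $A$, (iii) correctness of the final disjunction, and (iv) a running-time analysis.

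First I would record the key structural fact: on any path $v_1,\ldots,v_n$ with weights, if $W_i=\sum_{j\le i}\omega(v_j)$ then the split value at the $i$-th edge is $\min(W_i,\Omega(P_n)-W_i)$, which is a unimodal function of $i$ with its small values at the two ends. Consequently, if the splits are listed in non-decreasing order $s_1\le\cdots\le s_{n-1}$, then for every $p$ the $p$ smallest of them are assigned to the $\ell$ leftmost and $r$ rightmost edges for some $\ell+r=p$; in particular, $s_p$ is always the innermost of the currently placed left splits or the innermost of the currently placed right splits. This is what licences building the path by inserting splits in non-decreasing order onto the left or right frontier.

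Next I would prove by induction on $p$ that $A[p,W_L,W_R,v_1,\ldots,v_k]$ is true iff there is a partial labelling of $P_n$ that places the $\ell$ leftmost and $r$ rightmost vertices using exactly $v_i$ vertices of weight $w_i$ for each $i$, realises $\{s_1,\ldots,s_p\}$ as the multiset of splits on the $\ell+r=p$ outermost edges in a manner consistent with those vertex weights, and has $W_L$ (respectively $W_R$) equal to the cumulative weight of the leftmost (respectively rightmost) placed block, with the convention $W_L=0$ iff $\ell=0$ and likewise for $W_R$. The base case is immediate. In the inductive step, the most recently added split $s_p$ is produced by attaching one new vertex of some weight $w_i$ to either the left end, in which case the previous left-sum was $W_L-w_i$ and the new vertex decrements $v_i$, or symmetrically to the right end; this matches the two branches of the recurrence. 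For the final evaluation, once all $n-1$ splits are placed exactly one pivot vertex of some weight $w_i$ remains between the two frontiers, and its two incident edges are the $\ell$-th leftmost and $r$-th rightmost, whose actual split values are $\min(W_L,w_i+W_R)$ and $\min(W_R,w_i+W_L)$; these equal the recorded $W_L,W_R$ precisely when $W_L\le w_i+W_R$ and $W_R\le w_i+W_L$, which is the condition in the final disjunction. A concrete path is recovered by standard backtracking through the table.

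For the running time, $p$ takes $O(n)$ values; each of $W_L,W_R$ takes at most $|\mathcal{S}\cup\{0\}|=O(n)$ values; and the tuple $(v_1,\ldots,v_k)$ ranges over at most $\prod_{i=1}^{k}(m_i+1)=O(n^k)$ possibilities since $\sum_i m_i=n$. Hence there are $O(n^{k+3})$ entries, each computable from a disjunction over $k$ choices of $i$ involving two constant-time table lookups, i.e.\ in $O(k)$ time; the final disjunction contributes a subdominant $O(kn^2)$ term, for a total of $O(n^{k+3}\cdot k)$. The main obstacle, I expect, is the careful formulation and inductive proof of the semantic invariant of $A$, in particular the boundary cases $\ell=0$ or $r=0$ and the coordination between the sorted order of the splits and the unimodal structure of a path, which together implicitly ensure that step $p$ inserts exactly the split $s_p$ on the correct side.
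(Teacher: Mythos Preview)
Your proposal is correct and follows essentially the same approach as the paper: the unimodality of split values along a path justifies processing splits in non-decreasing order and assigning each to the left or right frontier, which is exactly the paper's argument. Your treatment is in fact more detailed than the paper's---in particular, your explanation of why the constraints $W_L\le w_i+W_R$ and $W_R\le w_i+W_L$ in the final disjunction are precisely what is needed for the recorded frontier sums to coincide with the actual split values at the pivot is more explicit than the paper's one-line justification, and your bound of $O(kn^2)$ for the final evaluation is tighter than necessary but correctly subdominant.
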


\longversion{

\begin{proof}
The correctness of the base case is clear.
For the correctness of the recurrence, let
 $\pivot$ be the vertex on $P_n$ where the two sub-solutions corresponding to the left and to
 the right part of $P_n$ meet. First note that
the values of the splits increase from left to right until we encounter vertex $\pivot$, from
which point they decrease. Filling up the path from both ends, this means that reading the splits
from $s_1$ to $s_{n-1}$, we can assign them to the path, each time only deciding whether we assign
it to the left part or to the right part of the path (in the \SCD model, this would be equivalent
to deciding whether to meet the next deadline on the processor $P_1$ or on the processor $P_2$).
The first (respectively second) case of the recurrence corresponds to assigning the next split to
the left (respectively right) part of the path by inserting a vertex of weight
$w_i$, $i\in \set{1,2,\hdots , k}$. The correctness of the final evaluation follows because it
inserts the one missing vertex weight that has not been used between the left and the right part
of the path.

The table has $|\mathcal{S}|^3 \cdot \Pi_{i=1}^k (m_i+1) \le n^{k+3}$ entries, each entry can be
computed in time $O(k)$, and the final evaluation takes time $O(n\cdot k)$.
\end{proof}

}

\section{\SRK{3} is \classNP-complete}
\label{sec:hardnessDeg3}

In this section we show that \textsc{Splits Reconstruction} with unit weights
is \classNP-complete for trees with maximum degree $3$.
Our polynomial-time reduction is done from the strongly \classNP-complete \NMTS problem
recalled in Section~\ref{sec:prelim}. This problem remains \classNP-complete even
if each integer of the \NMTS instance is at most $p(m)$, where $p$ is a polynomial
and $m$ is the length of the description of the instance.
Let us just mention that the next theorem does not immediately follow from Corollary~\ref{CorCaterpillars}.

\begin{theorem}\label{th-SRK3}
 \SRK{3} is \classNP-complete.
\end{theorem}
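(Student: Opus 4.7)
The plan is a polynomial-time reduction from \NMTS. Given $(A,B,S)$ with $A=\{a_1,\ldots,a_m\}$, $B=\{b_1,\ldots,b_m\}$, $S=\{s_1,\ldots,s_m\}$, I would first normalize: add a common constant $c$ to every element of $B$ and to every element of $S$, choosing $c$ large enough to force $a+a'<s_i<b+b'$ for all $a,a'\in A$, $b,b'\in B$, $s_i\in S$. This preserves \NMTS (the sum equation $\sum A+\sum B=\sum S$ is maintained) and guarantees that every partition of $A\cup B$ into pairs summing to the $s_i$'s must consist of pairs containing exactly one element of $A$ and one of $B$. I would then multiply all values by a suitable polynomial factor in $m$ so that the three natural ranges of split values, ``small'' (up to $\max(A\cup B)$), ``medium'' (of the order of $s_i$), and ``large'' (of the order of the prefix sums $\sum_{k\le i}s_k$), are pairwise disjoint and contain no accidental collisions with other pendant sizes that an off-template tree could create.

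The target tree $T^{*}$ I aim to realize is the max-degree-$3$ caterpillar obtained from a backbone path $v_1v_2\cdots v_m$ by attaching one ``tooth vertex'' $t_i$ to each $v_i$ via a single edge, and attaching at each $t_i$ two vertex-disjoint hairs of lengths $a_{\pi(i)}$ and $b_{\sigma(i)}$, where $(\pi,\sigma)$ witnesses a hypothetical \NMTS solution. A direct computation gives the multiset $\mathcal{S}(T^{*})$: small splits $\bigcup_{x\in A\cup B}\{1,2,\ldots,x\}$ from the hair edges and the hair-connector edges at the $t_i$, medium splits $\{1+s_i:1\le i\le m\}$, one per tooth-connector edge $v_it_i$, and large splits derived from the prefix sums $\{2i+\sum_{k=1}^{i}s_k:1\le i\le m-1\}$, one per backbone edge. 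Crucially, $\mathcal{S}(T^{*})$ depends only on the normalized input and not on the witnessing permutations, so it is the instance I output. The forward direction of the reduction is then immediate: any \NMTS solution yields a valid $T^{*}$.

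The crux is the converse: given any max-degree-$3$ tree $T$ with $\mathcal{S}(T)=\mathcal{S}$, I would show that $T$ must structurally be a $T^{*}$. The cascade has five steps. (1) The $m$ medium splits force $m$ edges of $T$ whose removal produces a pendant of size $1+s_i$; removing these ``tooth connectors'' partitions $T$ into $m$ tooth subtrees (of sizes $1+s_i$) and an $m$-vertex ``backbone part''. (2) Matching the $m-1$ large-split values against the backbone edges, together with the degree identity $\ell=n_3+2$ for max-degree-$3$ trees (each of the $m$ backbone vertices must host one tooth-connector), forces the backbone to be a path with exactly one tooth per vertex; the scaling rules out alternative backbone shapes (for instance, an endpoint carrying two teeth) because their edge-pendant values would lie outside $\mathcal{S}$. (3) Inside each tooth of size $1+s_i$, every internal split is small and hence at most $\max(A\cup B)<s_i$; this rules out a pure-path tooth (whose connector-adjacent edge would produce a forbidden pendant of value $s_i$), so the tooth-root must branch into two sub-subtrees of sizes $p,q$ with $p+q=s_i$. (4) A recursive version of the same no-forbidden-pendant argument rules out any further branching within either sub-subtree, so both are plain hairs of lengths $p$ and $q$. (5) Matching multiplicities of small splits against $\bigcup_{x\in A\cup B}\{1,\ldots,x\}$ forces $\{p_1,q_1,\ldots,p_m,q_m\}=A\cup B$ as a multiset, and $p_i+q_i=s_i$ combined with $a+a'<s_i<b+b'$ forces each pair $\{p_i,q_i\}$ to contain exactly one element of $A$ and one of $B$, producing the desired \NMTS solution. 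The main obstacle is steps (2) and (4): one must enumerate all potentially ``bad'' pendant-value patterns that an alternative branching (in the backbone or deep in a tooth) could produce, and choose the padding and scaling constants large enough to keep every such bad value provably outside $\mathcal{S}$.
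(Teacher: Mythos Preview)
Your reduction has the same overall shape as the paper's (hairs encode $A\cup B$, tooth-connector edges encode $1+s_i$, backbone edges encode prefix sums), but the specific normalization you propose does not achieve the range separation you claim. Adding a constant $c$ only to $B$ and to $S$ shifts $\max(A\cup B)=\max B$ and $\min S$ by the \emph{same} amount, so the inequality $\max(A\cup B)<s_i$ you invoke in step~(3) reduces to $\max\tilde B<\min\tilde s_i$ on the original instance---which \NMTS does not guarantee---and multiplying everything by a common factor $M$ does not help, since both sides scale identically. The paper instead adds \emph{different} constants $3C{+}2$, $5C{+}3$, $8C{+}5$ to $A,B,S$ respectively (with $C=\max(\tilde A\cup\tilde B)$); this simultaneously forces $a{+}a'<s_k<b{+}b'$, $a{+}a'>b_k$, and creates a genuine gap between the small range (ending at $6C{+}3$) and the medium range (starting at $8C{+}8$). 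A related structural issue: with your backbone of length $m$, the first ``large'' split equals $2+s_1$, only one more than the medium split $1+s_1$, so your large and medium ranges are not disjoint either. The paper uses a backbone of length $m{-}2$ with \emph{two} teeth at each endpoint, so its smallest black split already involves $s_1+s_2$ and sits safely above all red splits.

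The paper also organizes the converse direction differently and more simply. Rather than recovering the backbone-and-teeth decomposition top-down (your steps~(1)--(2), which are the hardest to make rigorous and are in fact not needed to extract an \NMTS solution), it argues bottom-up. The $2m$ splits of value~$1$ force $2m$ leaves; since there are exactly $2m$ copies of every value up to $\min a_i$, each leaf extends to a hair, and these $2m$ edge-disjoint hairs turn out to have lengths exactly the elements of $A\cup B$. Then for each red split $1+s_i$: since no split of value $s_i$ exists, the small-side endpoint has degree~$3$; its two other incident edges must be hair-ends (every non-hair split is already too large to fit), and the inequalities $a{+}a'<s_i<b{+}b'$ force exactly one $a$-hair and one $b$-hair there. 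This yields the \NMTS solution directly, without ever establishing the global backbone structure.
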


\begin{proof}
Let
$\tilde{A}=\{\tilde{a}_1, \tilde{a}_2, \dots, \tilde{a}_{m}\}$,
$\tilde{B}=\{\tilde{b}_1, \tilde{b}_2, \dots, \tilde{b}_{m}\}$ and
$\tilde{S}=\{\tilde{s}_1, \tilde{s}_2, \dots, \tilde{s}_{m}\}$ be an instance of \NMTS. 
Let $C = \max \{x ~:~ x \in \tilde{A} \cup \tilde{B}\}$.
\myWlog,
we construct the following equivalent \NMTS instance:
\begin{align*}
	a_{i} &:= \tilde{a}_i +2+ 3C, \quad 1\leq i \leq m,\\
	b_{i} &:= \tilde{b}_i +3+ 5C, \quad 1\leq i \leq m, \text{ and}\\
	s_{i} &:= \tilde{s}_i +5+ 8C, \quad 1\leq i \leq m.
\end{align*}
Let $A = \bigcup_{1\leq i \leq m} \{ a_i\}$, $B = \bigcup_{1\leq i \leq m} \{ b_i\}$, and
$S = \bigcup_{1\leq i \leq m} \{ s_i\}$.
Clearly, the instance $(\tilde{A}, \tilde{B}, \tilde{S})$ has a solution \longversion{if and only if}\shortversion{iff}
the instance $(A,B,S)$ has a solution.


Now we describe an instance $(V,\mathcal{S})$ of \SRK{3}, which is a \textsc{Yes}-instance \longversion{if and only if}\shortversion{iff}
the previous instance $(A,B,S)$ of \NMTS is a \textsc{Yes}-instance (see also Figure~\ref{fig-deg3}).

%
%
%

Let $n = 2m-2+\sum_{i=1}^{m}a_i +\sum_{i=1}^{m}b_i$ be the number of vertices in \longversion{the set }$V$;
we recall that \longversion{these vertices}\shortversion{they} have unit weight.
The multiset $\mathcal{S}$ of splits is defined as follows.
\begin{itemize}
\item For each value $s_i$, $1\leq i \leq m$, the value $1+s_i$ is added to $\mathcal{S}$
and we refer to these splits as \textit{red} splits.
\item For each value $s_i$, $2 \le i \le m-2$, the value $(i-1)+\sum_{j=1}^{i}(1+s_j)$ is added to $\mathcal{S}$
and we refer to these splits as \textit{black} splits.
\item For each value $a_i$, $1\leq i \leq m$, the values $\{1, 2, \dots, a_i\}$ are added to $\mathcal{S}$
and we refer to these splits as \textit{green} splits.
\item For each value $b_i$, $1\leq i \leq m$, the values $\{1, 2, \dots, b_i\}$ are added to $\mathcal{S}$
and we refer to these splits as \textit{blue} splits.
\end{itemize}
Finally each value $x$ of $\mathcal{S}$ is replaced by $\min(x,n-x)$.
\longversion{As required, $\mathcal{S}$ contains $n-1$ splits.}%

\begin{figure}[tb]
\centering
\includegraphics[scale=0.45]{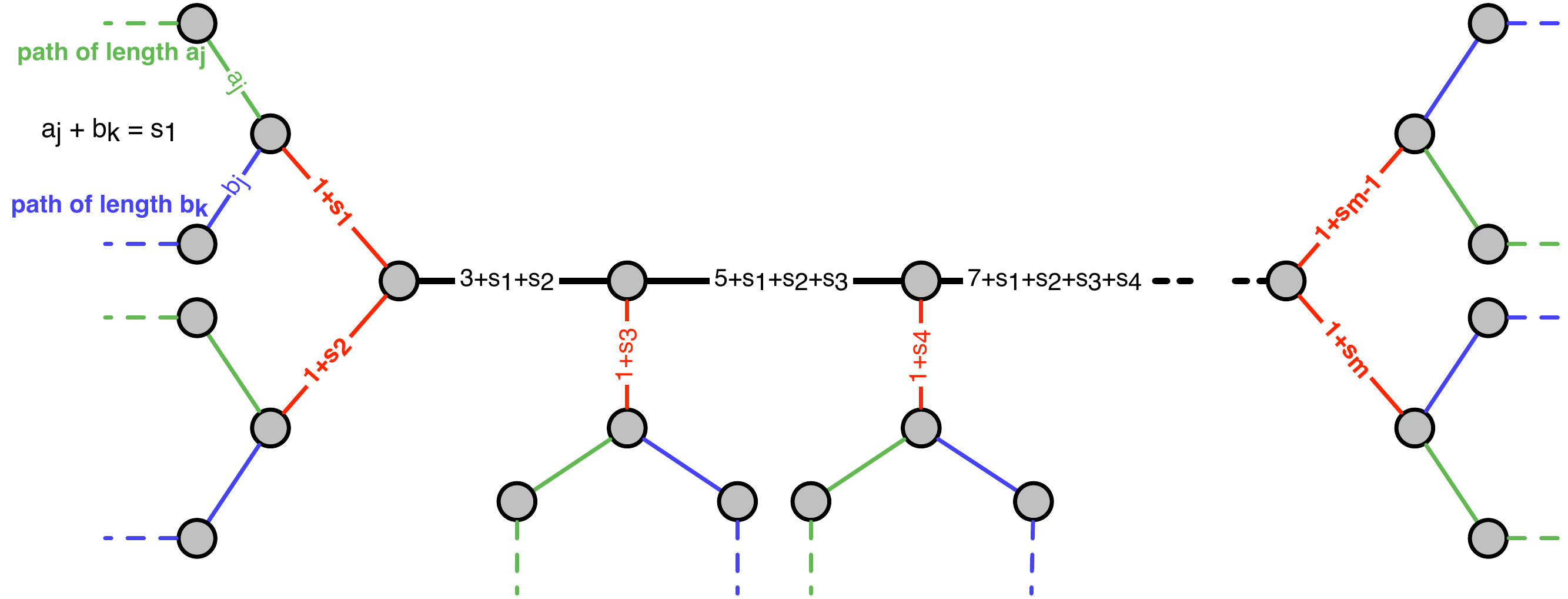}
\caption{\label{fig-deg3}A tree with maximum degree $3$ representing a solution to an \SRK{3} instance
constructed as described in the proof of Theorem~\ref{th-SRK3}.}
\end{figure}



\begin{restatable}{lemma}{RepSR}\label{SR3NPhard}
$(A,B,S)$ is a \textsc{Yes}-instance for \NMTS if and only if $(V,\omega: V \rightarrow \set{1}, \mathcal{S})$
is a \textsc{Yes}-instance for \SRK{3}.
\end{restatable}

\longversion{

\begin{proof}
Throughout the proof, when we refer to a split of value $x$, we mean a split of value $\min(x,n-x)$.

\smallskip

``$\Rightarrow$'' Assume that $(A,B,S)$ is a \textsc{Yes}-instance for \NMTS. We will show that there is a solution to \SRK{3}.
A tree $T=(V,E)$ and a bijective function $b:E\rightarrow \mathcal{S}$ can be constructed as follows (see also Figure~\ref{fig-deg3}).
Construct a path $P$ with $m-3$ edges with the black splits such that the $(i-1)^{\text{th}}$ edge is associated
to the black split $(i-1)+\sum_{j=1}^{i}(1+s_j), i\in \set{2,3, \hdots, m-2}$. Add two edges incident to the first vertex of $P$,
that are associated to the red splits $1+s_1$ and $1+s_2$. Add two edges incident to the last vertex of $P$
that are associated to the red splits $1+s_{m-1}$ and $1+s_{m}$. To the $i^{\text{th}}$ vertex of $P$, $2 \le i \le m-3$,
add one incident edge associated to the red split $1+s_{i+1}$. Finally, for each $a_i \in A$ and
each $b_i \in B$, construct the paths with $a_i$ and $b_i$ vertices respectively. To each edge
of these $2m$ paths, we can associate a green or a blue split.
It remains to attach one green path and one blue path to each endpoint of an edge associated to a red split
(one endpoint is already involved in the path $P$ and of degree $3$). The way to attach these path
is given by the solution to the $(A,B,S)$ instance.

\smallskip

``$\Leftarrow$'' Assume that there exists a solution $(T, b)$ to \SRK{3}, where $T=(V,E)$ is a tree
of maximum degree $3$ and $b$ is a bijection from $E$ to $\mathcal{S}$. We show how a solution of the \NMTS instance $(A,B,S)$ can be derived from $(T, b)$.
Let us note that for any $i,j,k\in \set{1,2,\hdots,m}$, we have that $a_i + s_j > s_k$, that $a_i+a_j<s_k$, that $b_i+b_j>s_k$, and that $a_i+a_j>b_k$.
%
\begin{claimLNCS}\label{cl:abpaths}
 For every $i\in \set{1,2,\hdots,m}$, there is a path on $a_i$ edges, called the $a_i$-path, using the splits $1,2, \hdots, a_i$ (\mywlog, they are green) and there is a path on $b_i$ edges, called the $b_i$-path, using the splits $1,2, \hdots, b_i$ (\mywlog, they are blue). All these $a$-paths and $b$-paths are edge-disjoint.
\end{claimLNCS}

%
%
\begin{proof}
As the instance has $2m$ splits of value $1$, $T$ has $2m$ leaves. Each of these leaves is incident to a green or blue split of value $1$. As the instance also has $2m$ splits of each of the values $2,3, \hdots, 2+3C$, the leaves of $T$ are the starting points of $2m$ edge-disjoint paths $P_1, P_2, \hdots, P_{2m}$, each having $2+3C$ edges in $T$. Consider an $x\in A\cup B$ and the splits $2+3C+1, 2+3C+2, \hdots, x$. As $x<4+6C$, and as there is no split smaller than $2+3C$ other than those we have already used to form the paths $P_i, 1\le i\le 2m,$ the splits $2+3C+1, 2+3C+2, \hdots, x$ are assigned to an extension of a path $P_i$, which, together with $P_i$, forms a path $P_i'$ with $x$ edges using the splits $1,2, \hdots, x$. All these paths $P_i', 1\le i \le 2m,$ are edge disjoint and \mywlog, green splits are assigned to their edges if they have at most $2+4C$ edges and blue splits otherwise.
\end{proof}

\begin{claimLNCS}\label{cl:redsplit}
 For every $i\in \set{1,2,\hdots,m}$, the red split of value $1+s_i$ is assigned to an edge $e_i$ of $T$ whose vertex $u_i$ is the common extremity of an $a$-path and a $b$-path, where $u_i$ is in the subtree of $T-e_i$ that has $s_i+1$ vertices.
\end{claimLNCS}

\begin{proof}
 As no split has value $s_i$, vertex $u_i$ is incident to another two edges besides $e_i$. We note that all splits, besides those of the $a$- and $b$-paths, have value at least $6+8C$. One such split plus the smallest $a_j$, $1\le j\le m$, would exceed $s_i$. So, $u_i$ is the end point of two $a/b$-paths. These cannot be two $a$-paths as $a_j+a_k < s_i$, for any $j,k\in \set{1,2,\hdots,m}$ and they cannot be two $b$-paths as $b_j+b_k>s_i$, for any $j,k\in \set{1,2,\hdots,m}$. Thus, $u_i$ is the common extremity of an $a$-path and a $b$-path.
\end{proof}


\noindent
Finally, a solution to the instance $(A,B,S)$ of \NMTS is formed by the couples $C_1,C_2, \hdots, \linebreak[1] C_m$, where each $C_i$ contains $a_{i_a}$ and $b_{i_b}$, where $i_a$ and $i_b$ are such that the edge $e_i$ of $T$, with $b(e_i) = 1+s_i$, is incident to the $a_{i_a}$-path and the $b_{i_b}$-path. This proves the \classNP-hardness of \SRK{3}.
\end{proof}

}

\noindent
As the certificate is a tree on $n$ vertices, the
membership in \classNP\ is obvious\longversion{ and Theorem~\ref{th-SRK3} is proved}.
\end{proof}

\section{Algorithm for \SR with few leaves}
\label{sec:algoLeaves} 

In this section we design an algorithm for \SR parameterized by the number $k$ of splits that are equal to one, i.e.
$k = |\{s=1 : s\in \mathcal{S}\}|$.
As each such split is incident to a leaf in a reconstructed tree,
the algorithm reconstructs trees with $k$ leaves.

The algorithm starts with a star $T$ with center $r$ and $k$ leaves.
The vertex $r$ is also the root of $T$ and $r$ is the only vertex which is allowed to have non-unit weight
during the execution of the algorithm. We start by setting $\omega(r)=n-k$.
The splits that are equal to $1$ are assigned to the edges of the star. 

At any stage of the algorithm, $T$ is a tree with splits from $\mathcal{S}$ assigned to its edges, and the goal is to
replace the root $r$ of $T$ by a tree $T_r$ with unit-weight vertices (except for the new root, that can have a non-unit weight), 
using splits from $\mathcal{S}$ that have not been assigned yet; the leaves of $T_r$ are made adjacent to the former neighbors of $r$ in $T$.
If there exists such a replacement where the splits form a subset of $\mathcal{S}$, we say that $T$ has a \emph{valid extension}.
Each tree $T$ uniquely defines a partition $(A,C,U)$ of the splits $\mathcal{S}$ such that
\begin{itemize}
\item $A$ represents the multiset of \emph{available} splits that have not yet been assigned to $T$,
\item $C$ represents the multiset of \emph{current} splits assigned to edges incident to $r$, and
\item $U$ represents the multiset of \emph{used} splits assigned to edges of $T$ that are not incident to $r$.
\end{itemize}
Let $b$ denote the value of the smallest split in $C$.
Our tree $T$ will grow out of $r$ as follows.

\begin{itemize}
\item If $\omega(r)=1$, then return \textsc{True}.
 Indeed, $T$ uses all splits from $\mathcal{S}$ as $A$ is empty.
\item If $A$ contains a split whose value is at most $b$, then $T$ has no valid extension and the algorithm backtracks.
 Indeed, on a path between two leaves in a valid tree, there is no split with value at most $b$ between two splits with value at least $b$.
\item If $|\{s\in A : s=b+1\}| > |\{ s\in C : s=b \}|$, that is, $A$ contains more splits with value $b+1$ than $C$ contains splits with value $b$, then
$T$ has no valid extension and the algorithm backtracks.
The correctness of this case holds by the pigeonhole principle and the argument used in the previous case.
\item If $|\{s\in A : s=b+1\}| = |\{ s\in C : s=b \}|$, then all valid extensions of $T$ are also valid extensions of the tree obtained from $T$ by subdividing each edge with
split $b$ that is incident to $r$. That is, for each edge $r v$ with a split of value $b$, add a new vertex $z_v$, remove the edge $r v$, and add edges $r z_v$ and $z_v v$.
Decrement $\omega(r)$ by $|\{ s\in C : s=b \}|$.
The algorithm recursively solves the problem on this tree.
\item Otherwise (if $|\{s\in A : s=b+1\}| < |\{ s\in C : s=b \}|$), some split from $C$ with value $b$ receives a parent split with value more than $b+1$. Go over all choices
for selecting a subset $U$ of $N(r)$ of size at least $2$ containing a vertex $v$ such that $r v$ is associated with a split with value $b$.
If $A$ contains no split that equals $1+\sum_{u\in U} s(r u)$, where $s(e)$ denotes the split associated to the edge $e$ of $T$, then discard this choice.
Otherwise, create a new vertex $z_U$, remove the edges $\{r u : u\in U\}$ from $T$, add the edges $\{z_U u : u\in U\cup \{r\}\}$, and decrement $\omega(r)$ by $1$.
The algorithm resursively solves the resulting subproblems. If one such tree has a valid extension, $T$ has a valid extension.
\end{itemize}

\begin{theorem}
 \SR can be solved in time $O(8^{k \log k}\cdot n)$, where $k = |\{s=1 : s\in \mathcal{S}\}|$ and $n$ is the number of vertices.
\end{theorem}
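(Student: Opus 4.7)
The plan has two parts: correctness, then running time.

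For correctness I would verify each rule in turn. The base case $\omega(r)=1$ indicates that $T$ is a legitimate reconstruction since all splits from $\mathcal{S}$ have been assigned and $r$ now represents a single weighted vertex. The first pruning rule rests on the following invariant for any valid extension: along every path starting at $r$ and going into one of its subtrees, the split values are nondecreasing. Since $b$ is the smallest split currently assigned at $r$, every yet-to-be-placed edge sits ``above'' a split of value $\geq b$, so no split of value $\leq b$ can survive in $A$. The second pruning rule is a pigeonhole argument: in any valid extension, each edge with split $b$ at the boundary of $r$ has a unique immediate parent edge, whose split value strictly exceeds $b$; it equals $b+1$ exactly when the $b$-edge is extended by a single subdivision, and exceeds $b+1$ when the $b$-edge is merged with other edges. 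Hence, if the number of $b$-edges exceeds the supply of $b+1$ splits, at least one merge must occur, so the equality case forces the subdivision and the strict-inequality case branches exhaustively over all admissible merges $U$ containing at least one split-$b$ vertex.

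For the running time I would set the measure $\mu(T)=|N(r)|$. Initially $\mu=k$, and $\mu$ never increases: subdivisions leave $\mu$ unchanged, while each branching step replaces $|U|\geq 2$ neighbors of $r$ by the single new vertex $z_U$, dropping $\mu$ by at least one. Thus the number of branching nodes along any root-to-leaf path of the recursion tree is at most $k-1$. Between consecutive branching steps only subdivisions occur, and each subdivision step decreases $\omega(r)$ by at least one and can be performed in $O(n)$ time, so the non-branching work sandwiched between branching steps is bounded by $O(n)$ overall per branch (it is simply the cost of reading off $A$, $C$, finding $b$, and updating $T$).

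For the branching factor at a single branching node, a naive enumeration of subsets $U\subseteq N(r)$ only gives $2^{|N(r)|}\le 2^k$ and would multiply to $2^{\Theta(k^2)}$ across the whole recursion, which is too coarse. I would therefore refine the count by observing that two neighbors of $r$ sharing the same current split are interchangeable, so it suffices to enumerate multisets of split values rather than subsets. At any moment $C$ has at most $|N(r)|\le k$ elements, and each distinct split value in $C$ occurs with multiplicity at most $k$; moreover, only those multisets whose sum plus one lies in $A$ are considered. A careful bookkeeping of these admissible multisets yields a branching factor of $k^{O(1)}$ at each node, which when raised to the depth $O(k)$ produces $k^{O(k)} = 2^{O(k\log k)}$ recursive calls, and combined with the $O(n)$ work per call matches the claimed $O(8^{k\log k}\cdot n)$ bound.

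The main obstacle will be this branching-factor refinement: proving that only polynomially many (in $k$) multiset types of $U$ need to be tried at each node, rather than exponentially many subsets. The key leverage I would use is that every split currently in $C$ was either introduced by a subdivision (increasing the previous split by exactly one) or by an earlier merge step (creating one new split value), so the number of distinct split values appearing in $C$ is controlled by the number of merges performed so far, which is bounded by $k$; combining this with the bound on multiplicities of each split value controls the number of relevant multisets $U$ per branching step.
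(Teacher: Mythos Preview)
Your correctness argument is essentially the paper's (modulo a harmless slip: split values are non\emph{increasing} as you move from $r$ toward the leaves, not nondecreasing; your conclusions are unaffected).

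The running-time argument, however, has a real gap. You correctly observe that each branching step strictly decreases $|N(r)|$, so the branching depth is at most $k-1$. But your claim that the branching factor at each node is $k^{O(1)}$ is not justified, and the justification you sketch does not work. Even granting that only multisets of split values (rather than subsets of $N(r)$) need to be tried, and that the number of distinct values in $C$ is at most $k$, the number of sub-multisets of a multiset with $d$ distinct values and multiplicities $m_1,\dots,m_d$ is $\prod_j (m_j+1)$, which is $2^k$ when all $m_j=1$. Nothing in your ``key leverage'' paragraph rules this out; bounding the number of distinct values by $k$ (via the number of merges) is not enough to make the multiset count polynomial in $k$.

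The paper does not attempt to bound the per-node branching factor by a polynomial. Instead it exploits the trade-off between the size of the chosen set $U$ and the progress it makes: choosing $|U|=i$ costs at most $\binom{c}{i}\le k^i$ branches but drops $|C|$ by $i-1$. This yields the recurrence
\[
T(c)\ \le\ \sum_{i=2}^{c}\binom{c}{i}\,T(c-i+1)\ \le\ \max_{2\le i\le c}\, k^{\,i+1}\cdot T\bigl(c-(i-1)\bigr),
\]
which unrolls to $T(k)\le \max_i k^{(i+1)k/(i-1)}$; the exponent is maximized at $i=2$, giving $k^{3k}=8^{k\log k}$. Combined with $O(n)$ work per call (since $\omega(r)$ strictly decreases), this gives the stated bound. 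Replacing your ``polynomial branching factor'' claim by this recurrence-based argument fixes the proof.
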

\begin{proof}
 The arguments for correctness have been given in the description of the algorithm.
 For the running time analysis, we observe that $\omega(r)$ decreases in each recursive call, no recursive call increases $|C|$, and
 the time spent in each recursion step is linear. Let $T(c)$ denote an upper bound on the number of atomic instances solved
 for an instance with $|C|=c \le k$, where an instance is atomic if the algorithm makes no recursive call for solving the instance.
 In the only case making more than one recursive call, we have
 \begin{align*}
  T(c) \le \sum_{i=2}^{c} \binom{c}{i} T(c-i+1),
 \end{align*}
 as the set $U$ in the neighborhood of $N(r)$ is replaced by one vertex $z_U$. As $c\le k$ and $\binom{c}{i} \le k^i$, we have that
 \begin{align*}
 T(c) &\le (c-1)\cdot \max_{i=2..c} \left\{ c^i \cdot T(c-(i-1)) \right\} \\&\le \max_{i=2..c}  \left\{ k^{i+1} \cdot T(k-(i-1)) \right\} \\&\le \max_{i=2..c}  \left\{ k^{(i+1)\frac{k}{i-1}} \right\}\enspace.
 \end{align*}
 This maximum is attained for $i=2$, which proves the theorem.
\end{proof}

\newcommand{\ChWSR}{\textsc{ChWSR}\xspace}

\section{Freely choosable weights}
\label{sec:freeweights}

We remark that the following modification of \WSR makes any set of splits realizable in some tree.
Suppose the weight function $\omega$ is not given, but freely choosable, that is, we ask whether,
given a multiset $\mathcal{S}$ of integers, there exists a tree $T=(V,E)$ and a weight function $\weights$,
such that $\mathcal{S}$ is the multiset of splits of $T$. We call this problem \ChWSR.

\begin{restatable}{theorem}{RalwaysSol}\label{alwaysSol}
\ChWSR always admits a solution.
\end{restatable}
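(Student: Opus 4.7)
The plan is to realize any multiset $\mathcal{S}$ as the multiset of splits of an appropriately weighted star. Write $\mathcal{S} = \{s_1, \ldots, s_{n-1}\}$ with $s_1 \le s_2 \le \cdots \le s_{n-1}$; the degenerate case $\mathcal{S} = \emptyset$ can be handled by a single vertex of weight $1$, so assume $n \ge 2$. I would take the tree $T = K_{1,n-1}$ on vertex set $V = \{c, \ell_1, \ldots, \ell_{n-1}\}$ with center $c$ and leaves $\ell_1, \ldots, \ell_{n-1}$, and assign $\omega(\ell_i) := s_i$ for every $i \in \{1,\ldots,n-1\}$.

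The crux is then to choose $\omega(c)$ so that each edge $c\ell_i$ has the prescribed split. Since deleting $c\ell_i$ splits $T$ into $\{\ell_i\}$ and the rest, the split of $c\ell_i$ equals
\[
\min\bigl(s_i,\; \omega(c) + \textstyle\sum_{j \ne i} s_j\bigr).
\]
To force this minimum to equal $s_i$ for every $i$, it suffices to ensure $\omega(c) \ge s_i - \sum_{j\ne i} s_j = 2s_i - \sum_{j} s_j$ for all $i$. The right-hand side is maximized at $i = n-1$, so I would set
\[
\omega(c) := \max\Bigl(1,\; 2 s_{n-1} - \textstyle\sum_{j=1}^{n-1} s_j\Bigr).
\]
This guarantees $\omega(c) \in \mathbb{N}$ (as required, since $\mathbb{N}$ excludes $0$) and makes the displayed minimum equal to $s_i$ for each $i$, so $\mathcal{S}(T) = \mathcal{S}$.

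There is essentially no obstacle beyond the bookkeeping above: the only delicate point is that when $s_{n-1} \le \sum_{j<n-1} s_j$ the natural lower bound $2 s_{n-1} - \sum_j s_j$ may be non-positive, which is exactly why I take the maximum with $1$ to keep the center's weight positive. In particular, the construction shows that every multiset of positive integers is realizable, so no arithmetic or parity obstruction of the kind that appears for tree Wiener indices can arise here.
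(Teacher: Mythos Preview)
Your proof is correct. It is in fact simpler than the paper's own argument: the paper decomposes $\mathcal{S}$ into $\kappa$ increasing chains (where $\kappa$ is the maximum multiplicity of any split), subdivides each edge of a star $K_{1,\kappa}$ accordingly, and then assigns weights as successive differences along each branch, with the root receiving weight equal to the maximum split value. Your construction bypasses the chain decomposition entirely and realizes any $\mathcal{S}$ directly as a plain (unsubdivided) star.

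What each approach buys: your star argument is shorter and requires only the single inequality $\omega(c)\ge 2s_{n-1}-\sum_j s_j$, with no case analysis beyond ensuring $\omega(c)\ge 1$. The paper's subdivided-star construction, on the other hand, yields more structural control over the realizing tree: in particular, it gives a path whenever every split has multiplicity at most two, and more generally produces a tree whose number of leaves equals the maximum multiplicity $\kappa$. This extra control is what the paper exploits in the subsequent remark relating \ChWSR\ back to \SR\ with many unit splits. For the bare existence statement of the theorem, however, your argument suffices and is the more economical one.
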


\begin{proof}
We show that the answer to \ChWSR is always yes: Decompose $\mathcal S$ into $\kappa$ chains $s_1^j<s_2^j<\ldots s_{m(j)}^j$, $j=1,\ldots ,\kappa$, where $\kappa$ is the maximal multiplicity in $\mathcal S$.
Let $T$ be obtained from the star $K_{1,\kappa}$ by subdividing $e_j$, the $j^\text{th}$ edge of $T$, $m(j)-1$ times (for $j=1,\ldots ,\kappa$), and root $T$ at the center $r$ of $K_{1,\kappa}$. Map $s_i^j$ to the edges of the subdivided $e_j$, $1\leq i \leq  m(j)$, keeping their order, so that the edge corresponding to $s_1^j$ is incident to a leaf of $T$. Finally, choose the weight $\omega (r)$ for the root to be equal to the maximum value in $\mathcal S$. For each leaf $v$ of $T$, set the weight $\omega (v)$ equal to the split assigned to the edge $\{v,u\}$, where $u$ is the parent of $v$. Any other vertex $v$ is given a weight equal to the difference of splits assigned to the edges incident to $v$. This choice of $T$ and $\omega$ clearly satisfies the requirements.
\end{proof}


\noindent \textbf{Remark.}
Due to the construction provided by the proof of Theorem~\ref{alwaysSol}, we
note that we are not only always able to construct a tree $T$ as required,
but the structure of this tree is also rather simple.
In particular, 
the realization of the split sequence is a path if each split in $\mathcal S$ repeats at most twice.

\longversion{\smallskip

\noindent
Observe that if we consider \ChWSR with unit weights, we are back at the problem \SR. It is not difficult to see that in \SR, a given set of splits can be realized in the same way as explained in the proof of Theorem~\ref{alwaysSol} for \ChWSR, the only difference being that each time a non-unit weight $w$ is assigned to some vertex $v$ in \ChWSR, in \SR we have to add $w-1$ leaves of unit weight to $v$. Thus, if $\mathcal S$ contains a sufficient number of splits $1$, then $\mathcal S$ can always be realized by a tree. More precisely, setting the boundary values $s^j_0:=0$ for all $j$, and letting $\kappa$ denote the maximum multiplicity over all elements in $\mathcal{S}$ except $1$, we have that if $\kappa\ge 2$ and $\mathcal S$ contains at least 
\begin{equation*}
\kappa+\sum_{j=1}^{\kappa}\sum_{i=1}^{m(j)}(s^j_i-s^j_{i-1}-1) + 2\cdot \max_{1\le i \le \kappa} \set{ s_{m(i)}^i} -1 -\sum_{j=1}^{\kappa} s_{m(j)}^j 
\end{equation*}
times the split $1$, then $\mathcal S$ can be realized by a tree $T$: $\kappa$ of them are needed to be assigned to edges incident to leaves of the star, $\sum_{i=1}^{m(j)}(s^j_i-s^j_{i-1}-1)$ of them are added, with pending vertices, to vertices introduced by subdividing the edge $e_j$, and $2\cdot \max_{i=1}^{\kappa} \set{ s_{m(i)}^i }-1 -\sum_{j=1}^{\kappa} s_{m(j)}^j $ of them are added, with pending vertices, to the root. (Note that it does not matter if there are more splits $1$ than needed in our construction, since we may always add leaves to the center of $T$.) The previous condition is, of course, sufficient, but not necessary. Moreover, the tree $T$ that realizes $\mathcal S$ is a subdivided star to which some leaves have been added.  In particular, if each split in $\mathcal S$ repeats at most twice, then we can realize $\mathcal S$ in a caterpillar with hair-length one. We note that the conditions $\kappa=2$ and the lower bound on the number of splits with value $1$ are also necessary for caterpillars with hair length one.
}

\section{Conclusion}
\label{sec:concl}

In Section \ref{sec:algoPaths},  we \longversion{have shown}\shortversion{have shown} that
\WSRK{2} is in \classXP\ when parameterized by the number of distinct vertex weights.
It remains open whether this problem is fixed parameter tractable (a generalization of the problem is \classW{1}-hard \cite{FellowsGR10}).
For practical purposes, it would further be important to identify other quantities that are small 
in practice (e.g. the diameter of the tree or topological indices),
and investigate the multivariate complexity of the considered problems parameterized
by combinations of these quantities.

There is a large contrast between the complexities of \WSR, where we are
given $n$ vertex weights, and \ChWSR, where we can freely choose the vertex weights, or,
alternatively, we can choose the vertex weights from an infinite multiset containing $n$ times
each element of $\mathbb{N}$.
It would be interesting to know some
restrictions on the multiset of vertex weights such that the problem becomes\longversion{ polynomial time
solvable, or fixed-parameter} tractable with respect to interesting parameterizations\longversion{, when we can chose the weights from this multiset}. Ideally, these restrictions should be consistent with the applications in drug design and discovery.

\longversion{\bigskip}
\medskip

\noindent \textbf{Acknowledgment.} We thank Ming-Yang Kao for communicating this problem.


\bibliographystyle{plain}
\bibliography{splits}

\end{document}